\theoremstyle{definition}
\newtheorem{theorem}{Theorem}
\newtheorem{corollary}{Corollary}
\newtheorem{definition}{Definition}
\newtheorem{lemma}{Lemma}
\newtheorem{proposition}{Proposition}
\newtheorem{remark}{Remark}
\newtheorem{example}{Example}
\DeclareMathOperator{\conv}{conv}
\DeclareMathOperator{\avg}{avg}
\begin{document}

\title{Pareto Efficient Nash Implementation Via Approval Voting}
\author{Yakov Babichenko\thanks{Department of Industrial Engineering and Management, Technion, Israel Haifa 3200003. This research was supported by the Air Force Office of Scientific Research grant \# FA9550-09-1-0538, by the Office of Naval Research grant \# N00014-09-1-0751, and by a Walter S. Baer and Jeri Weiss fellowship.}\ ,\ Leonard J.\ Schulman\thanks{Department of Computing and Mathematical Sciences, Caltech. Pasadena CA 91125. This research was supported in part by NSF Awards \# 1319745, 1618795.}}

\maketitle

\begin{abstract}

We study
implementation of a social choice correspondence in the case of two players who have von Neumann - Morgenstern utilities over a finite set of social alternatives, and the mechanism is allowed to output lotteries. Our main positive result shows that a close variant of the popular \emph{approval voting} mechanism succeeds in selecting \emph{only} Pareto efficient alternatives as pure Nash equilibria outcomes. Moreover, we provide an exact characterization of pure Nash equilibria profiles and outcomes of the mechanism. The characterization demonstrates a close connection between the approval voting mechanism and the notion of \emph{average fixed point,} which is a point that is equal to the average of all points that it does not Pareto dominate.


\end{abstract}

\section{Introduction}

Pareto efficiency is a key requirement in mechanism design and implementation theory. 
The question of whether mechanisms \textit{exist} that implement Pareto efficient outcomes, has been studied extensively in the literature. The classical works of Maskin~\cite{Ma} and others~\cite{DS,MR,AS} characterise Nash implementable social choice correspondences (SCCs) and construct robust mechanisms, so-called \emph{canonical mechanisms}, that accomplish implementation of \emph{every} Nash implementable SCC (in particular, Pareto efficient ones). These canonical mechanisms, however, are complex, and not likely to be used in a practical setting. 
This paper focuses on the question whether Pareto efficiency can be implemented by (non-dictatorial) \emph{simple and realistic mechanisms}.
We address this question in a quite general setting: two players who have von Neumann - Morgenstern utilities over a finite set of social alternatives\footnote{The results can be generalized to the case of ordinal preferences, see Section \ref{sec:ord}.}. Building upon the approval voting mechanism that is simple and widely used in practice (see~\cite{BF,ZMP}), we introduce a mechanism that implements Pareto efficient outcomes only.

We consider a standard setting that the number of alternatives is finite, and the mechanism's outcomes are \emph{lotteries} over the alternatives\footnote{This exact setting is addressed in virtual implementation, see~\cite{AS}.}. We are looking for a mechanism that satisfies the following properties:
\begin{itemize}
\item[(a)] A pure Nash equilibrium\footnote{Alternatively we may focus on the subset of strict Nash equilibria. All the results in the paper hold for strict equilibria, under mild genericity assumptions, see Section \ref{sec:str}.} always exists.
\item[(b)] All pure Nash equilibria outcomes are (approximately) Pareto efficient.
\item[(c)] The mechanism is anonymous: namely, it is symmetric in the players (this requirement reflects ``fairness" of the mechanism).
\end{itemize}

In the \emph{approval voting} mechanism each player either \emph{approves} or \emph{disapproves} each of the alternatives. The alternative that achieves the maximum number of approvals is chosen. In case of a tie the choice is uniform at random among the maximizers\footnote{The case of non-uniform choice among the maximizers is addressed in Section \ref{sec:non-unif}.}. The approval voting mechanism has been used for decades by many professional societies\footnote{For instance, the Institute of Electrical and Electronics Engineers (IEEE),  the Mathematical Association of America (MAA), and the American Statistical Association (ASA), to mention a few.}~\cite{BF} and more recently by the commercial coordination program Doodle, see~\cite{ZMP}. It is known that the approval voting mechanism admits Pareto efficient pure Nash equilibrium, but it also might admit inefficient equilibria, see \cite{BF} and Example \ref{ex:inef}. We consider the following modification of the approval voting mechanism: If any alternatives are approved by both players, then with probability $(1-\delta)$ the mechanism chooses uniformly at random among the alternatives that achieve two approvals (as in the original mechanism); while with probability $\delta$ the mechanism chooses uniformly at random among the alternatives that are approved by at least one player. A motivation for such a modification of the mechanism appears in Remark \ref{rem:mot}. Note that the mechanism requires only a binary message regarding each alternative---this is simpler than the direct revelation principle, which requires the exact utility at each alternative. Our main positive result (see Section \ref{sec:eq-char}) provides an exact characterization of pure Nash equilibria outcomes  and profiles of the modified approval voting mechanism. This characterization implies that the modified approval voting mechanism satisfies the above properties (1)-(3). Interestingly, the outcomes of the mechanism are closely related to the notion of an \emph{average fixed point} (see Section \ref{sec:afp}) which is a point that is equal to the average of all points that it does not Pareto dominate. Roughly speaking, the set of outcomes is the set of Pareto efficient alternatives that Pareto dominate an average fixed point, or an average fixed point itself if it is Pareto efficient.

The paper is organized as follows. Section \ref{sec:rl} further discusses the related literature on approval voting and implementation theory. The main results are in Section \ref{sec:av}, where we also discuss the application of the results to bargaining (Section \ref{sec:bar}). Section \ref{sec:non-unif} generalizes the results for a wider class of mechanisms, and Section \ref{sec:dis} concludes with a discussion.
  
\subsection{Related Literature}\label{sec:rl}

\subsubsection{Approval Voting}

Approval voting has been studied in the voting context~\cite{BF,B,MW}, generally in the case where the number of voters is large and the number of alternatives is small. It has been observed that approval voting admits a Pareto efficient Nash equilibrium~\cite{BF}, but also that it might admit Pareto inefficient equilibria~\cite{BF}. To the best of our knowledge,
our work is the first in which
a reasonable modification of the approval voting succeeds in selecting \emph{only} Pareto efficient outcomes as pure Nash equilibria outcomes. We do this for two-player interactions; in Section~\ref{sec:3p} we show it to be impossible for three or more players.

Most closely related to ours is a 
recent paper by N\'u\~nez and Laslier~\cite{NL}, which demonstrates efficiency of the approval voting mechanism in two-player interactions by showing the following result: Under the behavioural assumption of \emph{partial honesty}\footnote{The partial honesty assumption is that if a player is indifferent between two actions $a$ and $b$ where $a$ is a \emph{sincere} action (namely it contains all alternatives above some threshold) and $b$ is not sincere, then the player strictly prefers $a$ over $b$.} the two-player game that is induced by the approval voting mechanism admits a pure Nash equilibrium and all its pure Nash equilibria are Pareto efficient. Our results differ from \cite{NL} because we do not impose any behavioural assumption on the players, namely players preferences are determined solely by their utilities.
Moreover, our lower bounds on the utilities of the players in the modified approval voting mechanism improve upon the lower bounds of the standard approval voting mechanism. In \cite{NL} it is proven that every player gets at least the average outcome. We prove that every player gets at least his utility in the worst average fixed point. Every average fixed point Pareto dominates the average outcome, so this is a stronger bound. 
Finally, in the context of partial honesty and sincereness, the main theorem in \cite{NL} shows that for the standard approval voting mechanism there exists a pure Nash equilibrium where both players use sincere actions. It is interesting to note that in our modified approval voting mechanism this statement
is no longer true, see Example \ref{ex:sin} in Section \ref{sec:seq-elim}.


\subsubsection{Implementation Theory}

Implementation of Pareto efficient social choice correspondences (SCCs) is a central topic in implementation theory. The impossibility result by Hurwicz and Schmeidler~\cite{HSh} states that only dictatorial SCCs can be Nash implemented if we require that a pure Nash equilibrium always exists, and that all pure Nash equilibria outcomes are Pareto efficient. This impossibility result can be avoided by either posing (mild) domain restrictions on the preferences of the players (see e.g.,~\cite{DS}) or by \emph{virtual implementation}, which requires only an approximate implementation of the social choice correspondence (see e.g.,~\cite{AS}). In this paper we stick to the virtual implementation approach: we have no restrictions on the preferences, but the mechanism implements outcomes that are $\varepsilon$-close to Pareto efficiency rather than exactly Pareto efficient.

More generally, Nash Implementation in complete information settings has been studied extensively in various aspects. One aspect, originated by Maskin~\cite{Ma}, characterizes the class of SCCs that can be Nash implemented, see~\cite{Ma,DS,MR,AS}. The canonical mechanisms that implement any implementable SCC have been criticized for their complexity which makes their usage somewhat unrealistic. Simplifications of the canonical mechanism have been studied in~\cite{Sa,Mc,Se}, but yet the canonical mechanism remains quite involved. This led to another branch of implementation theory which aims to implement concrete SCCs using simple and practically realizable mechanisms. Such mechanisms (beyond the dictatorial ones) that achieve Pareto efficiency in all equilibria, are known only in concrete cases such as exchange economies settings~\cite{Hu,DSV,Sj,STY}, and more specifically for Walrasian allocations~\cite{RR,Ch} and Lindahl allocations~\cite{Ti,Ti93}.
This paper belongs to this branch of implementation theory, and addresses the setting of a finite number of alternatives with lottery outcomes. This exact setting has been studied in the context of general SCCs in the works~\cite{AS,DS}, which however use canonical mechanisms; they show that such mechanisms may implement Pareto efficient outcomes only. The contribution of the present paper is in the simplification of the implementation mechanism. We show that a modified approval voting mechanism virtually implements a Pareto efficient SCC. Moreover, in Section \ref{sec:non-unif} we introduce a class of \emph{weighted approval voting mechanisms} which implements a richer class of Pareto efficient SCCs (see Corollary \ref{cor:scc}). Our class of SCCs is similar to the class of SCCs that was introduced by Dutta and Sen in~\cite{DS}. The class of SCCs in~\cite{DS} was proven to be implementable by the (relatively involved) canonical mechanism. Our class of SCCs is proven to be implementable by the (relatively simple) weighted approval voting mechanism.

\section{Approval Voting Mechanism}\label{sec:av}

The finite set of alternatives is $[n]=\{1,2,...,n\}$. Player's $i=1,2$ von Neumann - Morgenstern utility of the alternative $k=1,2,...,n$ is denoted by $a_i^k$. We normalize the utilities such that $0\leq a^k_i \leq 1$. The utility profile of alternative $k$ is denoted by $a^k:=(a^k_1,a^k_2)$. The collection of utility profiles is denoted by $A:=\{a^k:1\leq k \leq n\}$ which is a collection of $n$ points in $[0,1]^2$.

In the \emph{approval voting} mechanism ($\mathcal{AV}$) each player $i=1,2$ submits a set of approved alternatives $L_i\subset [n]$. If the sets are disjoint (i.e., $L_1 \cap L_2 =\emptyset$) then we say that players \emph{disagree}, and the outcome lottery is a uniform distribution over $L_1 \cup L_2$ (or over $[n]$ if $L_1 \cup L_2=\emptyset$). Otherwise, when $L_1 \cap L_2 \neq \emptyset$,  we say that players \emph{reach an agreement}, and the outcome lottery is the uniform distribution over $L_1 \cap L_2$.  Formally, we denote by $UN(B)$ the uniform distribution over a finite set $B$, and we define outcome lottery by
 
\begin{align*}
f(L_1,L_2)=\begin{cases}
UN([n])         & \text{ if } L_1=L_2=\emptyset, \\
UN(L_1\cup L_2) & \text{ if } L_1 \cap L_2 = \emptyset \text{ and } L_1\cup L_2\neq \emptyset, \\
UN(L_1 \cap L_2) & \text{ if } L_1 \cap L_2 \neq \emptyset,
\end{cases}
\end{align*}
where the first condition is needed only in order that $f$ will be well defined for all pairs of sets.

For two vectors $x,y\in [0,1]^2$ we say that \emph{$x$ (strictly) Pareto dominates $y$}, if $x_1>y_1$ and $x_2>y_2$, and we denote $x>>y$. We say that an outcome $x\in [0,1]^2$ is \emph{$\varepsilon$-Pareto efficient (with respect to $A$)} if there is no $a\in A$ such that $a>>x+(\varepsilon,\varepsilon)$. For $\varepsilon=0$ we say that $x$ is Pareto efficient.

The approval voting mechanism might contain inefficient equilibria as demonstrated in the following example.

\begin{example}\label{ex:inef}
In the case of $4$ alternatives, let the collection of utility profiles be 
\begin{align*}
A=((1,0),(0,1),(0.9,0.9),(\frac{2}{3},\frac{2}{3})).
\end{align*}
It is easy to check that the pure action profile $(L_1,L_2)=(\{1,4\},\{2,4\})$ is a pure Nash equilibrium with the outcome $(\frac{2}{3},\frac{2}{3})$ which is inefficient.
\end{example}

By considering the above example in more detail, we can see that both players do not lose by approving also the efficient alternative $(0.9,0.9)$. However, since the opponent disapprove this alternative neither player gains from approving it either.

In order to resolve this problematic issue, we provide to each player an incentive to approve this efficient alternative \emph{irrespective} of whether the opponent approves it or not. We consider the mechanism $\mathcal{AV}_\delta$ which is identical to the $\mathcal{AV}$ mechanism except in one aspect. In the case of agreement ($L_1 \cap L_2 \neq \emptyset$) the outcome lottery is the uniform distribution over $L_1 \cap L_2$ with probability $1-\delta$ (not with probability 1), and the uniform distribution over $L_1 \cup L_2$ with probability $\delta$. Formally,
\begin{align*}
f(L_1,L_2)=\begin{cases}
UN([n])         & \text{ if } L_1=L_2=\emptyset, \\
UN(L_1\cup L_2) & \text{ if } L_1 \cap L_2 = \emptyset \text{ and } L_1\cup L_2\neq \emptyset, \\
(1-\delta) UN(L_1 \cap L_2) +  \delta UN(L_1\cup L_2) & \text{ if } L_1 \cap L_2 \neq \emptyset,
\end{cases}
\end{align*}
\begin{remark}\label{rem:mot}
Such a ``mistake" with probability $\delta$ may have several explanations.
\begin{itemize}
\item (An obvious explanation.) An intentional ``mistake" by the mechanism designer in order to induce better behaviour.
\item (An alternative explanation.) The mechanism performs the standard approval voting mechanism, \emph{but}, there in a noise in the communication between the players and the mechanism. When player $i$ sends a $0/1$ message regarding whether he disapprove/approve the $j$'th alternative, with small probability $\varepsilon$ the message is flipped during the communication. If we assume that the noise is independent across alternatives and across players, then the resulting lottery is similar to the lottery of $\mathcal{AV}_\delta$ (when $\delta=\Theta(\varepsilon)$). The only difference is that in this latter settings with small probability (of order $O(\varepsilon^2)$) the outcome lottery will be the uniform distribution over $[n]$. Such a difference obviously does not effect the strategic reasoning. Therefore, all the results that will be developed for the mechanism $\mathcal{AV}_\delta$ will hold also in the scenario where the implemented mechanism is the approval voting but with communication noise. Interestingly, we will see that insertion of such a noise to the system results in a selection of Pareto efficient equilibria only\footnote{
The phenomenon that an insertion of a noise to a system may serve as an effective toll for selection of desirable equilibria is well known. Classical results establish such selections of equilibria in various settings~\cite{Sel,HS,Y,Sam}.
}.
\end{itemize}
\end{remark}

Before we state our main positive result, which is an exact characterization of the pure Nash equilibria outcomes of the mechanism $\mathcal{AV}_\delta$, we introduce a fixed-point notion which (as we will see in the results) is closely related to the mechanism $\mathcal{AV}_\delta$.

\subsection{Average fixed Point}\label{sec:afp}

We start with several notations. For a finite collection of vectors $B\subset[0,1]^2$ we denote $\avg(B)=\frac{1}{|B|}\sum_{b\in B} b$, where $\avg(B)\in [0,1]^2$.

For the collection of utility profiles $A$ and a point $x\in [0,1]^2$ we denote by $A_{<,<}(x)=\{a\in A: a_1<x_1, a_2<x_2\}$ the set of utility profiles in the third quadrant of the axis starting at $x$. Similar notation is used for the other quadrants (e.g., $A_{<,>}(x)$ is the second quadrant). Similarly, we denote $A_{\leq,\leq}(x)=\{a\in A: a_1\leq x_1, a_2 \leq x_2\}$ the third quadrant that includes the axis. 

\begin{definition}
A vector $x=(x_1,x_2)$ is a \emph{boundaries-included average fixed point} of the collection $A$ if $\avg(A\setminus A_{<,<}(x))=x$. 
\end{definition}

We call the average fixed point boundaries-included because the utility profiles on the lower boundary $A_{\leq,\leq}(x) \setminus A_{<,<}(x)$ are included in the computation of the average.

A slightly less restrictive notion of average fixed point, allows a situation where part of the points on the lower boundary belong to the averaging set and part do not.

\begin{figure}[h]
\begin{center}
\includegraphics[scale=1]{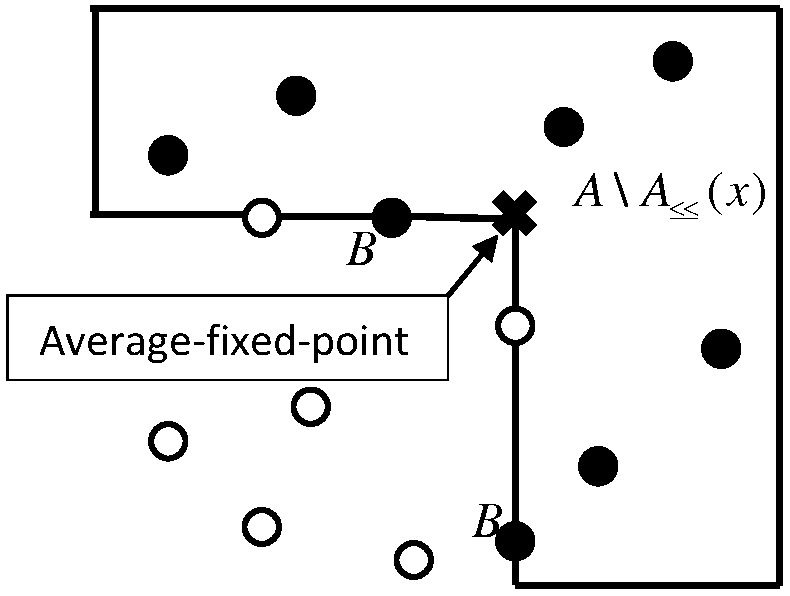}
\end{center}
\caption{An average fixed point.}\label{pic:afp}
\end{figure}

\begin{definition}
A vector $x=(x_1,x_2)$ is an \emph{average fixed point} of the collection $A$ if there exists a subset $B\subset A_{\leq, \leq}(x)\setminus A_{<,<}(x)$ such that $\avg((A\setminus A_{\leq, \leq}(x)) \cup B )=x$. We denote by $AFP(A)$ the set of all average fixed points.
\end{definition}

Figure~\ref{pic:afp} demonstrates the definition of an average fixed point. 

The following lemma demonstrates that even the more restrictive notion of boundaries-included average fixed point always exists, which obviously guarantees the existence of an average fixed point (i.e., $AFP(A)\neq \emptyset$).

\begin{lemma}\label{lem:afp}
Every collection $A$ admits at least one boundaries-included average fixed point.
\end{lemma}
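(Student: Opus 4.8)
The plan is to produce a fixed point constructively, by a monotone iteration that must terminate because $A$ is finite. Define the map $f(x)=\avg(A\setminus A_{<,<}(x))$; a boundaries-included average fixed point is exactly a point with $f(x)=x$. I would start the iteration at $x^0=\avg(A)$ and set $x^{t+1}=f(x^t)$, writing $S(x):=A\setminus A_{<,<}(x)$ for the ``kept'' set. The goal is to show that this sequence is coordinatewise non-decreasing and that the sets $S(x^t)$ stabilize, at which point a fixed point drops out.

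The engine of the argument is an elementary averaging fact: if $S$ is a finite nonempty set and $R\subseteq S$ consists of points each strictly below $\avg(S)$ in \emph{both} coordinates, then $\avg(S\setminus R)\geq \avg(S)$ coordinatewise, simply because deleting below-mean values raises the mean in each coordinate separately. First I would apply this with $S=A$ and $R=A_{<,<}(x^0)$---whose points are by definition strictly below $x^0=\avg(A)$ in both coordinates---to obtain $x^1=f(x^0)\geq x^0$. The inductive step is the same fact in disguise: assuming $x^{t+1}\geq x^t$, we have $A_{<,<}(x^t)\subseteq A_{<,<}(x^{t+1})$, hence $S(x^{t+1})\subseteq S(x^t)$, and the extra points deleted in passing from $S(x^t)$ to $S(x^{t+1})$ all lie in $A_{<,<}(x^{t+1})$, i.e.\ are strictly below $x^{t+1}=\avg(S(x^t))$ in both coordinates. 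Applying the fact with $S=S(x^t)$ yields $x^{t+2}=\avg(S(x^{t+1}))\geq \avg(S(x^t))=x^{t+1}$. Thus the whole sequence is coordinatewise non-decreasing.

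Since the sequence is non-decreasing, the sets $S(x^t)$ form a non-increasing chain of subsets of the finite set $A$, so they stabilize: $S(x^{T+1})=S(x^T)$ for some $T$. Then $f(x^{T+1})=\avg(S(x^{T+1}))=\avg(S(x^T))=x^{T+1}$, so $x^{T+1}$ is the desired fixed point. Along the way I would also check that every $S(x^t)$ is nonempty, so that each average is well defined: the point of $S(x^{t-1})$ with largest first coordinate has first coordinate at least $\avg(S(x^{t-1}))_1=x^t_1$, hence is not strictly below-left of $x^t$ and survives into $S(x^t)$.

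The main obstacle, and the reason a one-line topological (Brouwer) argument does not apply, is that $f$ is discontinuous and not globally monotone in $x$: enlarging $x$ shrinks the kept set, but the discarded points need not lie below its mean, so the mean can move either way. What rescues the argument is the special initialization at $\avg(A)$---starting there, every point ever discarded is below the current mean, which is precisely the hypothesis needed to keep the iteration monotone. Verifying carefully that the points newly discarded at each step are exactly those below the current average, and in particular confirming the set identity $S(x^t)\setminus R=S(x^{t+1})$, is the one place that demands attention.
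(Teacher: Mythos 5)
Your proof is correct and follows essentially the same route as the paper's: the identical iteration $x^{t+1}=\avg\bigl(A\setminus A_{<,<}(x^t)\bigr)$ started at $\avg(A)$, monotone because only strictly-below-average alternatives are ever discarded, and terminating by finiteness of $A$. The only cosmetic difference is that you track weak coordinatewise growth and stabilization of the kept sets $S(x^t)$, whereas the paper observes the strict increase $x^{t+1}>>x^t$ and bounds the number of distinct iterates by $n+1$; both yield the fixed point in the same way.
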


\begin{proof} 
We set $x^1=\avg(A)$, and for $t\geq 2$ we set $x^t=\avg(A\setminus A_{<,<}(x^{t-1}))$. 

If  $A_{<,<}(x^1)=\emptyset$ then $x^1$ is a boundaries-included average fixed point. Otherwise, we know that $x^2>>x^1$ because only strictly-below-average utility profiles were eliminated from the set $A$. Similarly, if 
$A_{<,<}(x^2)\setminus A_{<,<}(x^1)=\emptyset$
 then $x^2$ is a boundaries-included average fixed point. Otherwise, $x^3>>x^2$ because only strictly-below-average utility profiles were eliminated from the set $A\setminus A_{<,<}(x^1)$. There are at most $n$ different outcomes, therefore for some $t\leq n+1$ we will have $A_{<,<}(x^{t})\setminus A_{<,<}(x^{t-1})=\emptyset$ and $x^t$ is a boundaries-included average fixed point.
\end{proof}

A natural question arises: Is a boundaries-included average fixed point necessarily unique? The following example demonstrates that the answer is no.

\begin{example}
Let $A=((1,1),(0.8,0),(0,0.8))$, then both $(1,1)$ and $(0.6,0.6)$ are boundaries-included average fixed points.
\end{example}

%

The set of average fixed points is not necessarily a singleton. However, the following lemma shows that the set of average fixed points has the following structure: it must be a sequence of (weakly) Pareto dominating outcomes.

\begin{lemma}\label{lem:afp-pd}
For every two average fixed points $x,y\in AFP(A)$, either $x\leq \leq y$ or $y\leq \leq x$.
\end{lemma}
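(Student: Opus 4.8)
The plan is to argue by contradiction. Suppose two average fixed points $x$ and $y$ are incomparable, meaning neither $x\leq\leq y$ nor $y\leq\leq x$. A one-line case check shows this forces strict inequalities in opposite directions, so after possibly swapping $x$ and $y$ I may assume $x_1<y_1$ and $x_2>y_2$. For each fixed point I fix a witnessing averaging set: by definition there exist sets $S_x=(A\setminus A_{\leq,\leq}(x))\cup B_x$ and $S_y=(A\setminus A_{\leq,\leq}(y))\cup B_y$, with $B_x\subseteq A_{\leq,\leq}(x)\setminus A_{<,<}(x)$ and $B_y\subseteq A_{\leq,\leq}(y)\setminus A_{<,<}(y)$, such that $\avg(S_x)=x$ and $\avg(S_y)=y$. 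The only structural facts I need are: every element of $S_x$ lies outside $A_{<,<}(x)$, every element of $A\setminus S_x$ lies in $A_{\leq,\leq}(x)$, and the analogous statements for $y$.

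Next I would partition $A$ by membership in the two sets: $P=S_x\cap S_y$, $D_x=S_x\setminus S_y$, and $D_y=S_y\setminus S_x$, so that $S_x=P\sqcup D_x$ and $S_y=P\sqcup D_y$. The crucial observation concerns $D_x$. A point $a\in D_x$ lies in $S_x$ but not in $S_y$; from $a\in A\setminus S_y\subseteq A_{\leq,\leq}(y)$ we get $a_2\leq y_2<x_2$, and since $a\in S_x$ lies outside $A_{<,<}(x)$ while already $a_2<x_2$, it must satisfy $a_1\geq x_1$. Dually, any $a\in D_y$ lies in $A\setminus S_x\subseteq A_{\leq,\leq}(x)$ and hence has $a_1\leq x_1$. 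Thus every element of $D_x$ has first coordinate at least $x_1$, while every element of $D_y$ has first coordinate at most $x_1$. I regard deriving ``$a_1\geq x_1$ for $a\in D_x$'' as the main obstacle: it is the one place where the two quadrant conditions must be played against each other, and once it is in hand the rest is routine averaging.

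Finally the contradiction comes entirely from first coordinates. I would first dispose of the degenerate case $P=\emptyset$: then $S_x=D_x$ consists only of points with $a_2\leq y_2<x_2$, so $\avg(S_x)$ has second coordinate strictly below $x_2$, contradicting $\avg(S_x)=x$; hence $P\neq\emptyset$. Applying $\avg(S_y)=y$ and using that every element of $D_y$ has first coordinate $\leq x_1<y_1$, a single weighted-average inequality yields $\avg(P)_1\geq y_1$. Substituting into $\avg(S_x)=x$, the first coordinate of $\avg(S_x)$ is the weighted average over $S_x=P\sqcup D_x$ of a nonempty block $P$ whose first-coordinate mean is $\geq y_1>x_1$ and a block $D_x$ whose first coordinates are all $\geq x_1$; this weighted average is therefore strictly greater than $x_1$, contradicting $\avg(S_x)_1=x_1$. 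Since the roles of the two coordinates are symmetric, one could instead finish through second coordinates; only one direction is needed.
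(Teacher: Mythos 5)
Your proof is correct and takes essentially the same route as the paper: argue by contradiction from incomparability, fix witnessing averaging sets, constrain the symmetric difference to half-planes determined by $x$ (your key step that $a\in D_x$ forces $a_1\geq x_1$, playing the two quadrant conditions against each other, is exactly the content of the paper's geometric description of its sets $F$ and $G$), and finish with a weighted-averaging argument. The only cosmetic difference is that you conclude via the common part $P$ (treating $P=\emptyset$ separately), whereas the paper uses the one-step observation that passing from one averaging set to the other adds only points weakly below, and removes only points weakly above, the old average in the relevant coordinate, which immediately yields the contradictory inequality.
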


\begin{proof}
Assume by way of contradiction that $x,y$ satisfies $x_1>y_1$ and $x_2<y_2$. Let 
$B_x\subseteq A_{\leq, \leq} (x) \setminus A_{<,<}(x) $ 
be such that $x=\avg((A\setminus A_{\leq, \leq} (x)) \cup B_x)$. We denote $A_x = (A\setminus A_{\leq, \leq} (x)) \cup B_x$ and we denote $B_x^C = (A_{\leq, \leq}(x) \setminus A_{<,<}(x))\setminus B_x$ the complementary lower boundary points (which are not included in the averaging). Similarly we denote $A_y$, $B_y$ and $B_y^C$.

\begin{figure}[h]
\begin{center}
\includegraphics[scale=1.2]{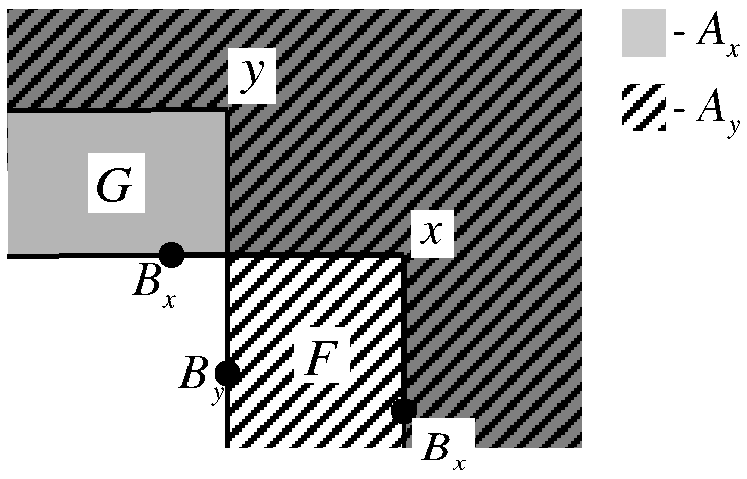}
\end{center}
\caption{The sets $A_x,A_y,B_x,B_y,F$ and $G$.}\label{pic:fg}
\end{figure}

Note that
\begin{align}\label{eq:ay}
A_y = (A_x \cup F) \setminus G
\end{align}
where (see Figure~\ref{pic:fg})
\begin{align*}
F &= \{a\in A: y_1\leq a_1\leq x_1, a_2\leq x_2\}\setminus B_y^C \text{ and}\\
G &= \{a\in A: a_1\leq y_1, x_2\leq a_2 \leq y_2\}\setminus B_y. 
\end{align*}

With respect to the average $x$ (and specifically the average $x_2$ of player 2), when we switch from $A_x$ to $A_y$, we add points that are weakly below $x_2$ (the set $F$), and we remove points that are weakly above $x_2$ (the set $G$). Therefore, $y_2\leq x_2$, which is a contradiction.
\end{proof}

The existence of an average fixed point (Lemma \ref{lem:afp}) and the structure of average fixed points (Lemma \ref{lem:afp-pd}) allows us to define the notion of \emph{minimal average fixed point}.

\begin{definition}\label{def:mafp}
A point $m_{afp}(A) \in [0,1]^2$ is a minimal average fixed point if $m_{afp} \in AFP(A)$, and for every $y\in AFP(A)$ it holds that $y \geq \geq m_{afp}$.
\end{definition}

Note that Lemmas \ref{lem:afp} and \ref{lem:afp-pd} prove existence and uniqueness of $m_{afp}(A)$.

%

\subsection{Characterization of the pure Nash equilibria}\label{sec:eq-char}

Before the statement of our main positive result, we introduce several notions.

For a collection $A$, we denote by $PE(A)=\{a\in A: \text{ there is no } b\in A \text{ such that } b >> a \}$ the set of Pareto efficient points of $A$. 

Our main positive result is an exact characterization of pure Nash equilibria outcomes of the mechanism $\mathcal{AV}_\delta$.

\begin{theorem}\label{theo:laa}
For every collection $A$ and for every $0< \delta \leq 1$, the set of pure Nash equilibria outcomes of the mechanism  $\mathcal{AV}_\delta$ is exactly the union of the following two types of outcomes:
\begin{enumerate}
\item The set of agreement outcomes 
\begin{align*}
AG(A)=\{(1-\delta)a+\delta x: a\in PE(A), x \in AFP(A), \text{ and } a\geq \geq x\}.
\end{align*}
\item The set of disagreement outcomes 
\begin{align*}
DIS(A)=\{x\in AFP(A): \text{There is no } a\in A \text{ such that } a >> x\}.
\end{align*}
\end{enumerate}
\end{theorem}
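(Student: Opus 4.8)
The plan is to treat the agreement and disagreement equilibria separately, and for each to prove soundness (every pure equilibrium of that type has an outcome of the stated form) and completeness (every such outcome is realized by some pure equilibrium). The engine behind everything is an elementary averaging fact that I isolate first. If one fixes the opponent's set $T$ and lets player $i$ vary its own set, then the attainable unions $L^1\cup L^2$ range over \emph{all} supersets of $T$ (in the disagreement regime too, realizing a superset $U$ via the disjoint completion $U\setminus T$), while $\avg_i(S)$ strictly increases upon inserting an element $k$ exactly when $a^k_i$ exceeds the current average, and strictly decreases upon deleting such a $k$ when $a^k_i$ is below it. Hence the maximal value $\mu_i$ of $\avg_i(\cdot)$ over supersets of $T$ is attained at some $U^\ast$ with $\mu_i=\avg_i(U^\ast)$ that contains every $k$ with $a^k_i>\mu_i$ and whose elements outside $T$ all satisfy $a^k_i\ge\mu_i$. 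This greedy/fixed-point characterization is what links best responses to the average-fixed-point condition.

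For the disagreement case, take an equilibrium $(L^1,L^2)$ with $L^1\cap L^2=\emptyset$, write $U=L^1\cup L^2$ and $x=\avg(U)$. Player $1$ may freely add or delete its own elements while staying disjoint from $L^2$, so the averaging fact forces $U$ to contain every $k$ with $a^k_1>x_1$ and $L^1$ to contain no $k$ with $a^k_1<x_1$; symmetrically for player $2$. Thus $A\setminus A_{\leq,\leq}(x)\subseteq U$ and $U\cap A_{<,<}(x)=\emptyset$, so $x=\avg(U)$ witnesses $x\in AFP(A)$. To obtain the extra condition that no $a\in A$ has $a>>x$, I use a cross-regime deviation: such an $a=a^k$ would lie in exactly one player's set, say $k\in L^1$, and then player $2$ could append $k$ to $L^2$, creating agreement with intersection $\{k\}$ and unchanged union, changing its payoff to $(1-\delta)a^k_2+\delta x_2>x_2$ (valid for $0<\delta<1$; the degenerate $\delta=1$ collapses the two regimes and is checked directly). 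Hence $x\in DIS(A)$. Conversely, given $x\in DIS(A)$ I realize a witnessing $U$ with $\avg(U)=x$, assign each $k$ with $a^k_1>x_1$ to $L^1$ and each $k$ with $a^k_2>x_2$ to $L^2$ (no $k$ is claimed twice, since no $a>>x$), and distribute boundary alternatives so that each player retains only weakly preferred elements. The averaging fact rules out within-regime improvements, and since the best attainable union average is exactly $x_1$, any switch to agreement yields coordinate $(1-\delta)\avg_1(I)+\delta\,\avg_1(U')\le x_1$, so disagreement is optimal.

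The agreement case is the crux. Fix an equilibrium with $I:=L^1\cap L^2\neq\emptyset$, $U=L^1\cup L^2$. The key observation is that player $1$'s choice \emph{decouples}: the part of $L^1$ inside $L^2$ chooses $I$ (any nonempty subset of $L^2$) while the part outside $L^2$ chooses $U$ (any superset of $L^2$), and these do not interact since $I\subseteq L^2\subseteq U$ regardless. Optimizing the intersection alone gives $\avg_1(I)=\max_{k\in L^2}a^k_1$; because $I\subseteq L^2$, the chain $\avg_1(I)=\max_{k\in L^2}a^k_1\ge\max_{k\in I}a^k_1\ge\avg_1(I)$ forces every element of $I$ to share first coordinate $a_1$, and symmetrically second coordinate $a_2$, so $\avg(I)=a:=(a_1,a_2)\in A$ is a single alternative (up to utility-duplicates). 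Optimizing the union part exactly as before gives $x:=\avg(U)$ with the maximality $x_i=\mu_i$. Now $a\ge\ge x$ follows from the disagreement deviation: breaking agreement lets player $i$ attain coordinate $x_i$ (the union average it already maximizes), so optimality of $(1-\delta)a_i+\delta x_i$ forces $a_i\ge x_i$. The intersection conditions also give $L^2\subseteq\{a^k_1\le a_1\}$ and $L^1\subseteq\{a^k_2\le a_2\}$, so any $b>>a$ lies in neither set, i.e.\ $b\notin U$; but $b_1>a_1\ge x_1$ would force $b\in U$, a contradiction, whence $a\in PE(A)$. Finally, using $a\ge\ge x$ to handle the elements of $I$, the union analysis yields $A\setminus A_{\leq,\leq}(x)\subseteq U\subseteq A\setminus A_{<,<}(x)$, so $x\in AFP(A)$ and the outcome is $(1-\delta)a+\delta x\in AG(A)$.

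For completeness in the agreement case I reverse this: given $a\in PE(A)$, $x\in AFP(A)$ with $a\ge\ge x$, take a witnessing $U$ with $\avg(U)=x$ (inserting a copy of $a$, which when $a=x$ leaves the average undisturbed), put $a$ into both sets so that $I=\{a\}$, and distribute $U\setminus\{a\}$ by sending every alternative with $a^k_1>a_1$ to $L^1$ and every alternative with $a^k_2>a_2$ to $L^2$—disjoint demands precisely because $a\in PE(A)$ forbids $a^k>>a$—placing the remaining alternatives on whichever side keeps each player holding only weakly preferred, above-union-average elements. Checking the three best-response conditions (optimal intersection value $a_i$, optimal union value $x_i$, and no profitable switch to disagreement, which gives only $x_i\le(1-\delta)a_i+\delta x_i$) then certifies the equilibrium. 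I expect this final assignment to be the main obstacle: one must split $U$ so that the intersection, union, and cross-regime constraints hold simultaneously, and the only reason this is possible is the Pareto efficiency of $a$, which guarantees every alternative has an unambiguous preferred owner.
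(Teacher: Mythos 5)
Your proposal is correct and follows essentially the same route as the paper's proof: the same agreement/disagreement decomposition, the same monotone-averaging engine (your ``averaging fact'' is the paper's Lemma~\ref{lem:max-dis} plus the add-above-average/remove-below-average deviations of Lemmas~\ref{lem:union}--\ref{lem:dis}), the same forced-unique-agreement argument (Lemma~\ref{lem:unique}), and the same witness constructions with $B_i$, the sets above each coordinate, and an arbitrarily split rectangle. Your ``decoupling'' of the intersection and union components is a slightly cleaner way to organize the best-response analysis, and your explicit flagging of the degenerate case $\delta=1$ (where the strict-improvement deviations in the paper's Lemmas~\ref{lem:unique}, \ref{lem:ag} and \ref{lem:dis} lose their bite, though the theorem's outcome set is unaffected since dominated fixed points are then absorbed into $AG(A)$) is in fact a bit more careful than the paper itself.
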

The equilibria outcomes are demonstrated in Figure ~\ref{pic:mt}. A collection $A$ with a single disagreement equilibrium, and two agreement equilibria is presented.

\begin{figure}[h]
\begin{center}
\includegraphics[scale=1]{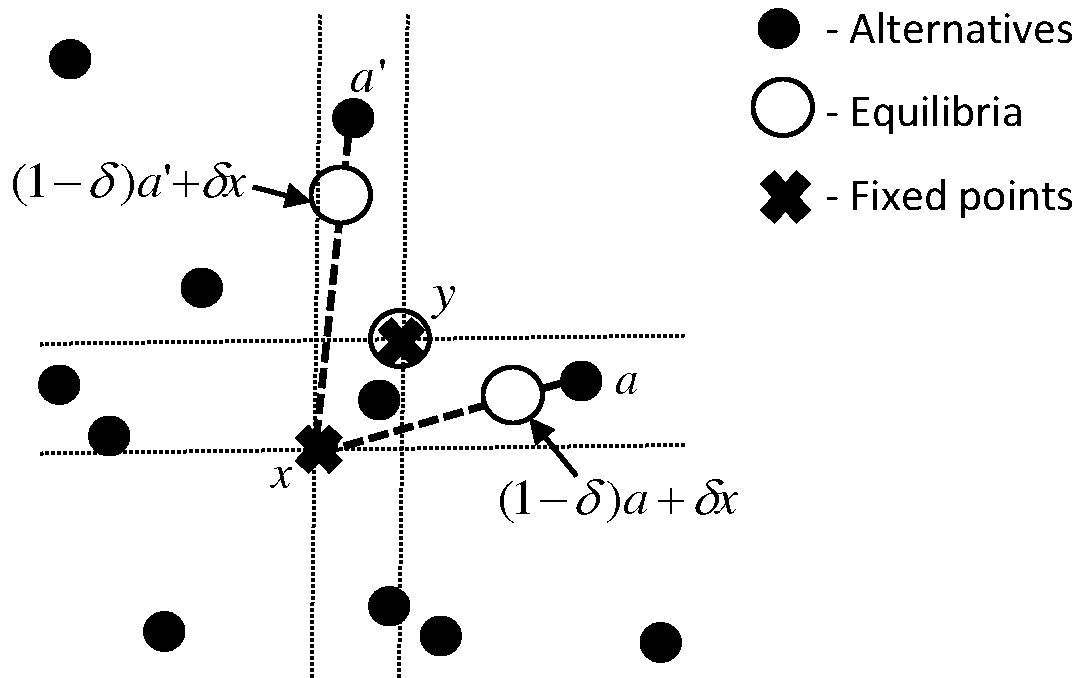}
\end{center}
\caption{Equilibria outcomes of $\mathcal{AV}_\delta$.}\label{pic:mt}
\end{figure}

The proof is presented in Section \ref{sec:proof}. A straightforward corollary shows that $\mathcal{AV}_\delta$ is indeed an anonymous mechanism \emph{all} of whose equilibrium outcomes are (approximately) efficient.

\begin{corollary}\label{cor:laa}
The mechanism $\mathcal{AV}_\delta$ admits a pure Nash equilibrium for every collection $A$, and $\mathcal{AV}_\delta$ is an anonymous mechanism which is $\delta$-Pareto efficient in all equilibria.
\end{corollary}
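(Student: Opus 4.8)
The plan is to derive all three assertions directly from the characterization in Theorem~\ref{theo:laa}, so that the only real work is to show the characterized outcome set is nonempty and that both families of outcomes it describes are $\delta$-Pareto efficient.

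First I would dispose of anonymity by inspection. By construction $\Sigma_1(n)=\Sigma_2(n)=2^{[n]}$, and every branch in the definition of $f_n$ depends on $(L_1,L_2)$ only through the symmetric expressions $L_1\cap L_2$ and $L_1\cup L_2$; the guard conditions $L_1=L_2=\emptyset$, $L_1\cap L_2=\emptyset$, and $L_1\cap L_2\neq\emptyset$ are likewise symmetric. Hence $f_n(L_1,L_2)=f_n(L_2,L_1)$, which is exactly Definition~\ref{def:ano}.

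Next, for existence I would show $AG(A)\cup DIS(A)\neq\emptyset$ and invoke Theorem~\ref{theo:laa}. By Lemma~\ref{lem:afp} I may fix some $x\in AFP(A)$ and split into two cases according to whether $x$ is dominated. If no $a\in A$ satisfies $a>>x$, then $x\in DIS(A)$ and we are done. Otherwise some $a\in A$ has $a>>x$; since $A$ is finite I can pass to a Pareto efficient $a'\in PE(A)$ with $a'\geq\geq a$, so that $a'\geq\geq x$, and then $(1-\delta)a'+\delta x\in AG(A)$. In either case the outcome set is nonempty, giving a pure Nash equilibrium. For $\delta$-Pareto efficiency I would then verify that every $x\in NEO(A)=AG(A)\cup DIS(A)$ admits no $b\in A$ with $b>>x+(\delta,\delta)$. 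For a disagreement outcome $x\in DIS(A)$ this is immediate, since $b>>x+(\delta,\delta)$ would force $b>>x$, contradicting the defining condition of $DIS(A)$.

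The agreement case is the crux and the step I expect to be the main obstacle. Writing $x=(1-\delta)a+\delta z$ with $a\in PE(A)$, $z\in AFP(A)$, and $a\geq\geq z$, I would use the coordinatewise identity $x_i=a_i-\delta(a_i-z_i)$; since $a_i-z_i\in[0,1]$ (as $a_i\geq z_i$ and both lie in $[0,1]$) this yields $x_i\geq a_i-\delta$, i.e. $x_i+\delta\geq a_i$. Consequently any $b\in A$ with $b>>x+(\delta,\delta)$ would satisfy $b_i>x_i+\delta\geq a_i$ for both $i$, i.e. $b>>a$, contradicting $a\in PE(A)$. This estimate $x_i+\delta\geq a_i$ is the one genuinely delicate point: it is exactly what lets the $\delta$ slack in the definition of $\delta$-Pareto efficiency absorb the displacement of the agreement outcome $x$ away from the efficient point $a$ toward the fixed point $z$. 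Everything else is either inspection (anonymity) or a short finiteness and case analysis (existence).
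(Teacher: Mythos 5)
Your proposal is correct and follows essentially the same route as the paper's proof: anonymity by inspection, nonemptiness of $AG(A)\cup DIS(A)$ via Lemma~\ref{lem:afp} with the same case split on whether the average fixed point is Pareto dominated, and then invoking Theorem~\ref{theo:laa}. The only difference is that you spell out details the paper dismisses as ``easy to check''---the passage to a Pareto efficient dominating alternative by finiteness, and the estimate $x_i+\delta\geq a_i$ (valid since $a_i-z_i\in[0,1]$) that makes agreement outcomes $\delta$-Pareto efficient---both of which are accurate.
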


\begin{proof}[Proof of Corollary \ref{cor:laa}]
It is easy to check that $\mathcal{AV}_\delta$ is anonymous. It is also easy to check that all the elements in $AG(A)\cup DIS(A)$ are $\delta$-Pareto efficient. The remaining part is to show that $AG(A)\cup DIS(A)\neq \emptyset$. By Lemma \ref{lem:afp} $AFP(A)\neq \emptyset$. For some average fixed point $x\in AFP(A)$, if $x$ is not Pareto dominated by any utility profile, then $x\in DIS(A)$. Otherwise, there exists an $a\in PE(A)$ that Pareto dominates $x$, and then $(1-\delta)a+\delta x \in AG(A)$.
\end{proof}

Theorem \ref{theo:laa} can be stated in terms of social choice correspondences and virtual implementation. For a collection of utility profiles $A$ we define the \emph{minimal-AFP Pareto efficient correspondence} (see Definition \ref{def:mafp}) to be 
\begin{align*}
\sigma(A)=\{a\in PE(A \cup AFP(A)): a\geq \geq m_{afp}(A)\}.
\end{align*}
An immediate Corollary from Theorem \ref{theo:laa} is the following.

\begin{corollary}\label{cor:scc}
The mechanism $\mathcal{AV}_\delta$ virtually implements the social choice correspondence $\sigma$.
\end{corollary}



Actually, in the proof of Theorem \ref{theo:laa}, we show a stronger result that characterizes the set of Nash equilibria \emph{action-profiles} (not only outcomes).

For a collection $A$, the set of Nash equilibria action profiles of the game $\Gamma_{\mathcal{AV}_\delta}(A)$ is the union of the following two types of equilibria:

\textbf{Agreement equilibria}, which exist if there exists a Pareto dominated average fixed point $x\leq \leq a$ for $a\in PE(A)$. The equilibrium action profile is demonstrated in Figure \ref{pic:ag-profile}.

\begin{figure}[h]
\begin{center}
\includegraphics[scale=0.9]{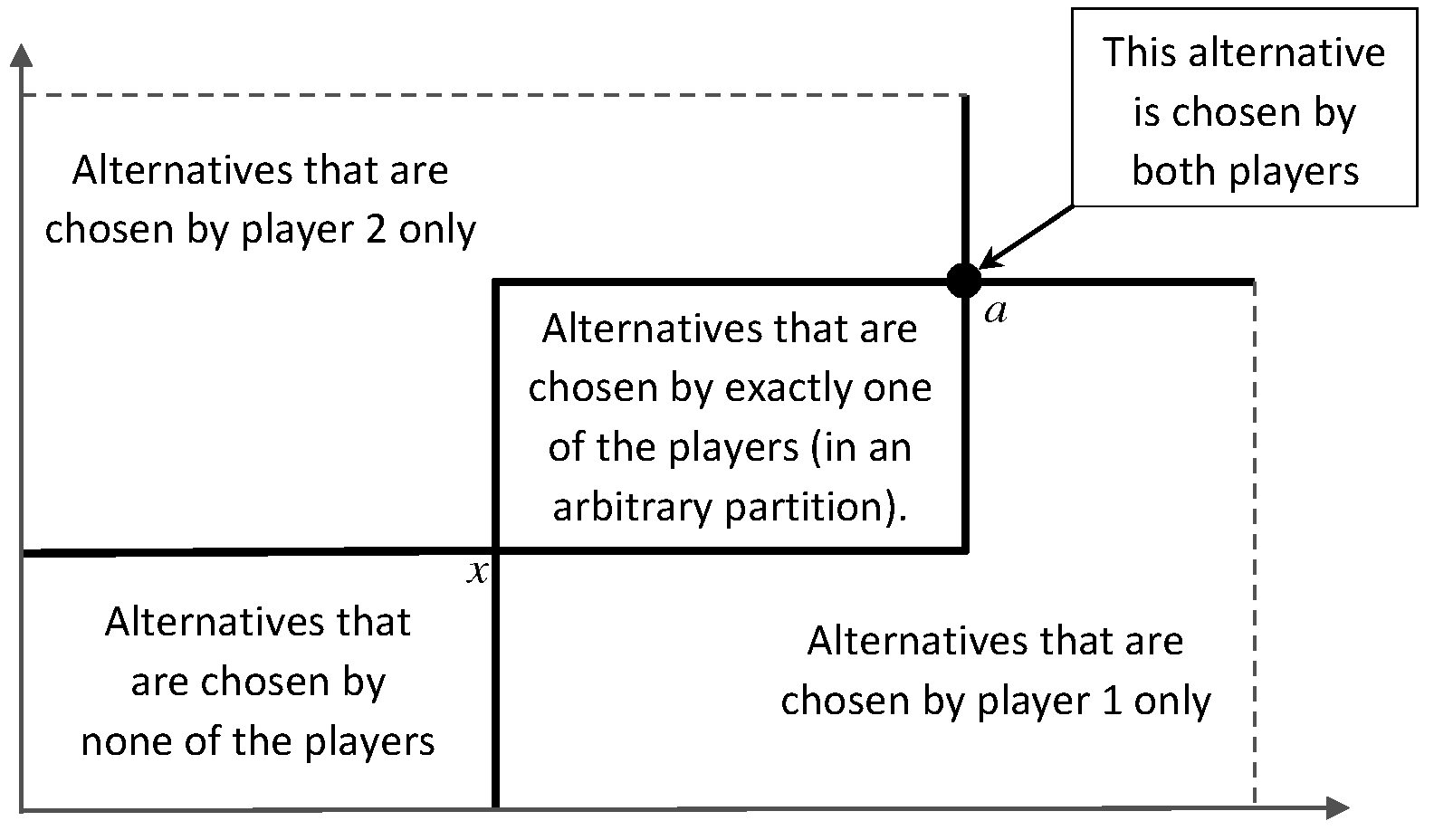}
\end{center}
\caption{An agreement equilibrium profile.}\label{pic:ag-profile}
\end{figure}

\textbf{Disagreement equilibium}, which exist if there exists a Pareto efficient average fixed point $x$. The equilibrium action profile is demonstrated in Figure \ref{pic:dis-profile}.

\begin{figure}[h]
\begin{center}
\includegraphics[scale=1]{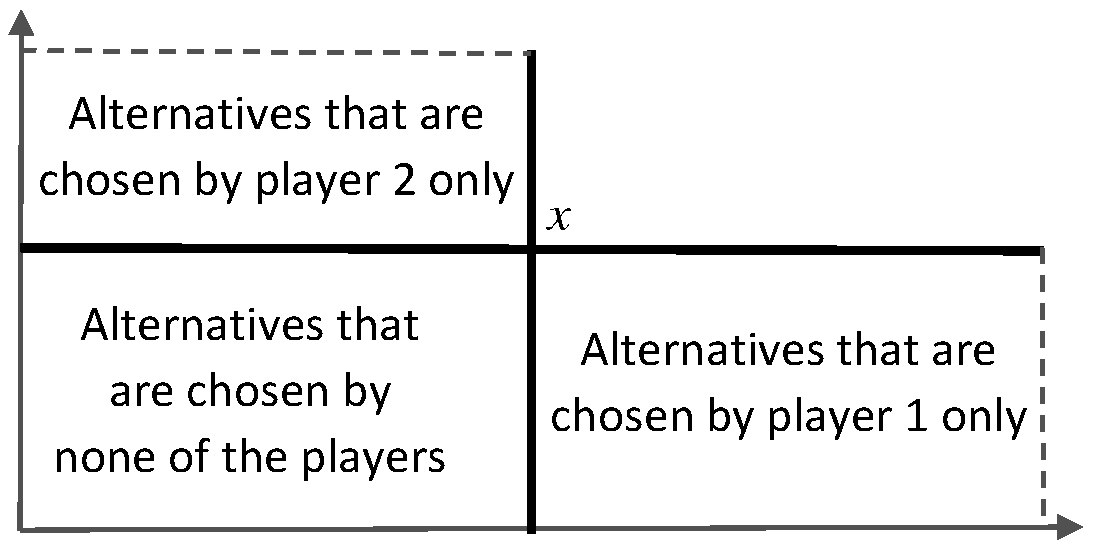}
\end{center}
\caption{A disagreement equilibrium profile.}\label{pic:dis-profile}
\end{figure}

\subsection{Bargaining}\label{sec:bar}
As we have seen, the suggested approval voting mechanism is symmetric, it implements only Pareto efficient outcomes, and in addition the set of outcomes is scale invariant (because Nash equilibria outcomes are scale invariant). Efficiency, symmetry and scale invariance are universally accepted axioms in bargaining theory, see~\cite{KS}. Therefore, it is valid to consider the suggested mechanism in the context of bargaining problems with a finite set of alternatives (this model has been previously addressed, see~\cite{A}). Implementation in the context of bargaining has been mainly studied for sub-game perfect equilibria implementation, see~\cite{BRW,Ma,H,A,Mi}, which establish strong positive results on the implementation of classical bargaining solutions. Nash implementation, on the other hand, has received less attention in the literature. One evident reason is that the above mentioned strong positive results are \emph{impossible} for Nash implementation, see~\cite{Mi}.

The suggested approval voting mechanism has multiple equilibria outcomes and therefore cannot define a unique point as a solution for the bargaining problem. In fact, there exists no mechanism with unique equilibrium outcome that satisfies approximate Pareto efficiency, as we show in Proposition \ref{pro:imp-u} in the Appendix. Nevertheless, the mechanism does produce \emph{some} prediction regarding the possible outcomes of the bargaining interaction: no player gets a payoff that is less than his utility at the minimal average fixed point. Here we demonstrate this observation in the classical ``partition of the pie" problem (see e.g., Rubinstein~\cite{R}). A similar exercise can be done for general bargaining sets.
In the pie partition problem, there is a unit of good that should be split among the bargainers. There are several ways of modelling this problem with a finite number of alternatives:
\begin{enumerate}
\item The collection is $A=((0,1),(1,0))$.

\item The collection is a $\frac{1}{k}$-grid of the line $\conv((0,1),(1,0))$. Namely, $A=((\frac{c}{k},1-\frac{c}{k}))_{c=0}^k$.
\end{enumerate}
There is also a modelling which does not assume efficiency in the definition of the problem.
\begin{enumerate}
\setcounter{enumi}{2}
\item The collection is a $\frac{1}{k}$-grid of the triangle $\conv((1,0),(0,1),(0,0))$. Namely, 
\begin{align}\label{eq:sp-pie}
A=\{(\frac{c}{k},\frac{d}{k}): c,d\in \mathbb{N} \text{ and } \frac{c}{k}+\frac{d}{k}\leq 1 \}.
\end{align} 
\end{enumerate}

For both modellings 1 and 2 the unique outcome of both mechanisms $\mathcal{AV}$ and $\mathcal{AV}_\delta$ is the point $(\frac{1}{2},\frac{1}{2})$. This is not very interesting. Actually $(\frac{1}{2},\frac{1}{2})$ is the unique outcome of \emph{every} anonymous mechanism $M$ for both modellings 1 and 2. This follows from the fact that the game induced by an anonymous mechanism, is a symmetric game with constant sum 1, and therefore has a unique Nash equilibrium outcome $(\frac{1}{2},\frac{1}{2})$.

It is interesting to analyse the outcomes of the mechanism $\mathcal{AV}_\delta$ for small values of $\delta$ and $k$ in modelling (3). The following proposition states that the outcomes get close to the Pareto efficient segment that connects the points $(0.39,0.61)$ and $(0.61,0.39)$. 

\begin{proposition}\label{pro:pie}
Let $A=A(k)$ be the collection in equation \eqref{eq:sp-pie}. For every $k$ and every $\delta>0$ all the pure Nash equilibria outcomes of the mechanism $\mathcal{AV}_\delta$ are $(\delta+\frac{1}{k})$-close to the segment $\conv((x,1-x),(1-x,x))$, where $x\approx 0.39$ is the solution of the equation $x^3-x+\frac{1}{3}=0$ in the segment $x\in [0,\frac{1}{2}]$.
\end{proposition}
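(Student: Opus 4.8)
The plan is to reduce everything to the characterization in Theorem~\ref{theo:laa}, so that the task becomes (a) locating the average fixed points of $A(k)$ and (b) reading off $AG(A)$ and $DIS(A)$. Since the grid $A(k)$ is symmetric under swapping the two coordinates, Corollary~\ref{cor:sym} forces every average fixed point onto the diagonal, so I may write it as $(y,y)$. For $y<\tfrac12$ the closed square $[0,y]^2$ lies entirely inside the triangle $T=\conv\{(0,0),(1,0),(0,1)\}$ (its far corner has coordinate sum $2y<1$), so $A_{<,<}(y,y)$ is exactly the grid points strictly inside that square, and the boundary-subset $B$ in the definition of an average fixed point affects the average only at order $1/k$. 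Thus the fixed-point condition is that the average of $A(k)\setminus[0,y)^2$ has first coordinate $y$.

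Next I would carry out the continuum computation. As $k\to\infty$ the grid-average of the surviving points converges, with error $O(1/k)$, to the centroid of $T\setminus[0,y]^2$. Using $\avg$ of regions via moments (area of $T$ is $\tfrac12$ with $x$-moment $\tfrac16$; the square has area $y^2$ and $x$-moment $\tfrac{y^3}{2}$), the fixed-point equation becomes $\frac{1/6-y^3/2}{1/2-y^2}=y$, which simplifies to $y^3-y+\tfrac13=0$. On $[0,\tfrac12]$ the polynomial $y^3-y+\tfrac13$ has derivative $3y^2-1<0$, hence is strictly decreasing from $\tfrac13$ to $-\tfrac1{24}$, giving a unique, transversal root $x\approx 0.39$. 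For $y\in(\tfrac12,1)$ a separate (routine) recomputation that subtracts the corner triangle cut off by $s+t\le 1$ shows the surviving mass concentrates near the axes, so its $x$-centroid stays well below $y$ and no further fixed point appears. Therefore the continuum average fixed point is the single point $(x,x)$.

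I would then transfer this to the grid. Writing $h(y)$ for the continuum defect (the $x$-centroid of $T\setminus A_{<,<}(y,y)$ minus $y$), the analysis above shows $h$ has a single transversal zero at $x$ and is bounded away from $0$ elsewhere on $[0,1]$. Since the grid-average deviates from the corresponding centroid by $O(1/k)$ uniformly, every sign change of the grid defect—hence every $x\in AFP(A)$, by Lemma~\ref{lem:afp-pd} all on the diagonal—lies within $O(1/k)$ of $(x,x)$. With the fixed points pinned down, I read off the two equilibrium families. Here $PE(A)$ is the hypotenuse grid $\{(c/k,(k-c)/k)\}$. For $DIS(A)$: a diagonal average fixed point $(z,z)$ with $z\approx x<\tfrac12$ is strictly dominated by the grid point $(c/k,c/k)$ with $c=\lfloor zk\rfloor+1$ as soon as $2(\lfloor zk\rfloor+1)\le k$, so $DIS(A)=\emptyset$ for all but a few small $k$, and for those small $k$ the cushion $1/k$ already exceeds the $\approx 0.15$ distance from $(x,x)$ to the target segment. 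For $AG(A)$: the constraint $a\geq\geq z$ with $a$ on the hypotenuse and $z\approx(x,x)$ forces $a_1\in[z_1,\,1-z_2]$, i.e.\ $a$ lies, up to $O(1/k)$, on $\conv((x,1-x),(1-x,x))$; and $(1-\delta)a+\delta z$ sits at perpendicular distance $\delta(1-2x)/\sqrt2<\delta$ from the line $s+t=1$ while projecting into the segment. Combining the $O(\delta)$ mixing displacement with the $O(1/k)$ displacement of $z$ and of the endpoints yields the asserted $(\delta+\tfrac1k)$-closeness.

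I expect the main obstacle to be the quantitative grid-to-continuum passage: controlling the grid-average error uniformly enough to conclude that \emph{all} average fixed points of $A(k)$ are $O(1/k)$ from $(x,x)$ (rather than merely that one exists there), and then bookkeeping the constants so that the mixing contribution and the discretization contribution jointly fit inside $\delta+\tfrac1k$ for \emph{every} $k$, including the explicit check of the finitely many small values of $k$ where $DIS(A)$ might be nonempty. By contrast, the centroid computation leading to $y^3-y+\tfrac13=0$ and the $\geq\geq$ endpoint analysis are routine.
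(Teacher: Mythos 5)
Your proposal is correct and follows essentially the same route as the paper's own proof: reduce to the characterization of Theorem~\ref{theo:laa}, use Corollary~\ref{cor:sym} to place all average fixed points on the diagonal, compute the continuum centroid of the triangle minus the square to get $y^3-y+\tfrac13=0$ with its unique root on $[0,\tfrac12]$, transfer to the grid with error $\tfrac1k$, and read off $AG(A)$ and $DIS(A)$ (the paper is simply less explicit about small $k$ and the uniformity of the grid-to-continuum error). Two minor notes: your case $y\in(\tfrac12,1)$ needs no recomputation, since any average of points of $A(k)$ has coordinate sum at most $1$, forcing $y\le\tfrac12$ directly; and $PE(A)$ is not exactly the hypotenuse grid (it also contains the points with $c+d=k-1$), though this is immaterial as those lie within $\tfrac1k$ of the line $s+t=1$ and are absorbed by your error budget.
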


Before the proof of the Proposition, we present a simple observation (given Lemma \ref{lem:afp-pd}) regarding average-fixed points that will be useful for finding the average fixed points of $A(k)$.

\begin{lemma}\label{lem:sym}
Let $A$ be a symmetric collection of utility profiles, and let $x=(x_1,x_2)$ be an average fixed point of $A$, then $x_1=x_2$. 
\end{lemma}

\begin{proof}
By symmetry of the collection, $(x_2,x_1)$ is also an average fixed point. By Lemma \ref{lem:afp-pd} it must be the case that $(x_1,x_2)\leq \leq (x_2,x_1)$ or the opposite $(x_2,x_1) \leq \leq (x_1,x_2)$. In both cases it follows that $x_1=x_2$.
\end{proof}

\begin{figure}[h]
\begin{center}
\includegraphics[scale=0.9]{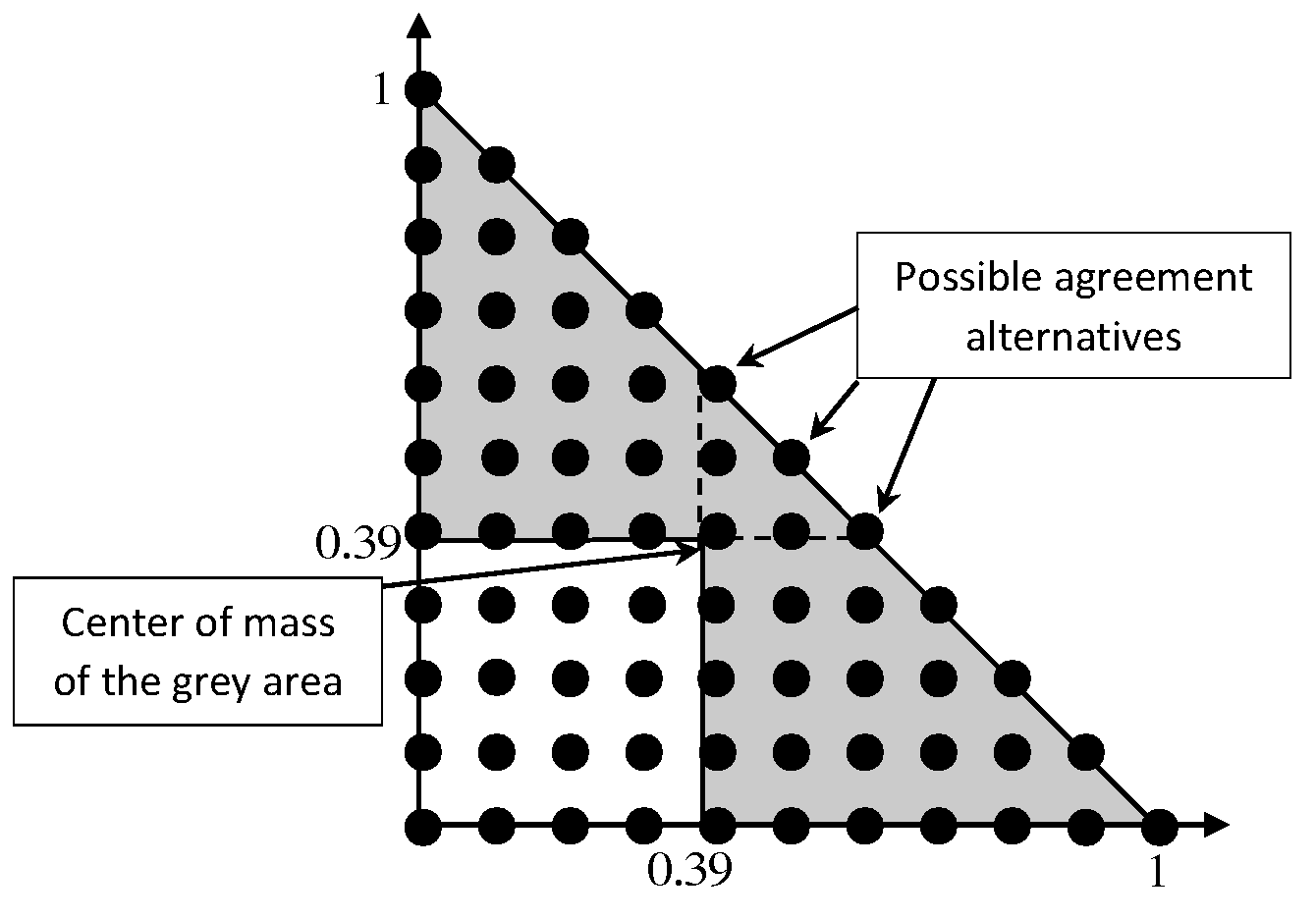}
\end{center}
\caption{Equilibria of the splitting-the-pie example.}\label{pic:pie}
\end{figure}

\begin{proof}[Proof of Proposition \ref{pro:pie}]
First we approximate the average fixed points of the set $A=A(k)$ up to an error of $\frac{1}{k}$. By Lemma \ref{lem:sym} all average fixed points of $A$ are of the form $(x,x)$ for $x\in [0,\frac{1}{2}]$. 

We consider the continuous version where we replace the sets $A\setminus A_{<,<}(x,x)$ and $A\setminus A_{\leq, \leq}(x,x)$ by the set 
\begin{align*}
\mathcal{B}= \conv((0,0),(0,1),(1,0))\setminus \conv((0,0),(0,x),(x,0),(x,x))
\end{align*}
with the uniform density. The center of mass of $\mathcal{B}$ approximates both the average of $A\setminus A_{<,<}(x,x)$ and the average of $A\setminus A_{\leq, \leq}(x,x)$ up to an error of $\frac{1}{k}$ because the difference between these expressions depends only the boundary points that are close to $A_{\leq, \leq}(x,x)\setminus A_{<,<}(x,x)$ which are at most $\frac{1}{k}$ fraction of all points in $A\setminus A_{\leq, \leq}(x,x)$.

The center of mass of $\mathcal{B}$ is given by
\begin{align}\label{eq:cm}
\frac{1}{\frac{1}{2}-x^2}\left[\frac{1}{2}(\frac{1}{3},\frac{1}{3})-x^2(\frac{x}{2},\frac{x}{2})\right]
\end{align}
where $(\frac{1}{3},\frac{1}{3})$ is the center of mass of $\conv((0,0),(0,1),(1,0))$, $(\frac{x}{2},\frac{x}{2})$ is the center of mass of $\conv((0,0),(0,x),(x,0),(x,x))$, and $\frac{1}{2}$ and $x^2$ are the corresponding areas of these sets.
By the average fixed point assumption we deduce from formula \eqref{eq:cm} that
\begin{align*}
\frac{1}{\frac{1}{2}-x^2}\left[\frac{1}{6}-\frac{x^3}{2}\right]=x \Rightarrow x^3-x+\frac{1}{3}=0
\end{align*}
This equation has a unique solution for $x\in [0,\frac{1}{2}]$. Therefore all average fixed points $(y,y)$ of $A(k)$ are located $\frac{1}{k}$ close to $(x,x)$.
Finally, by the characterization of equilibria outcome in Theorem~\ref{theo:laa}, we get that all equilibria are agreement equilibria, where the agreement point is $(a,1-a)$ for $y\leq a\leq 1-y$ and the outcome is $\delta$ close to $(a,1-a)$. See Figure~\ref{pic:pie}.
\end{proof}

It is worth explaining why allowing sub-efficient alternatives can create a wider set of efficient bargaining solutions. The sub-efficient alternatives increase the level of punishment; i.e., player $i$ can reduce the payoff of player $3-i$ below $\frac{1}{2}$ (Figure~\ref{pic:pie}).  
Therefore, new equilibria arise where player $i$ plays a ``clever" punishment strategy in which the opponent's best option is to agree on a division where he gets less than $\frac{1}{2}$. An example of such a ``clever" punishment strategy in Figure~\ref{pic:pie} for player 1 is the strategy that includes the 22 utility profiles in the bottom-right trapezoid and one additional utility profile $(0.6,0.4)$. This strategy is ``clever" in the above sense because it balances between two opposite goals of player 1: On the one hand, to punish player 2 in order to force player 2 to agree to an unfair division; and on the other hand, to exclude alternatives that are bad for himself, because with a positive probability $\delta$ these bad alternatives are taken into account (even in the case of agreement).

\subsection{Pareto frontier}\label{sec:par-f}

The mechanism is allowed to return lottery outcomes, whereas we measured the efficiency of a mechanism with respect to the \emph{pure} utility profiles. Consider, for instance, the following collection of alternatives:
\begin{align*}
A=((1,0),(0.6,0),(0,1),(0,0.6)).
\end{align*} 
The unique average fixed points of $A$ is $(0.4,0.4)$ which is Pareto efficient (with respect to $A$). Therefore, by Theorem \ref{theo:laa} $(0.4,0.4)$ is the unique (disagreement) equilibrium outcome. It is reasonable to argue that the equilibrium outcome $(0.4,0.4)$ is not Pareto optimal, because the mechanism can choose a lottery with expected utilities $(0.5,0.5)$.

A stronger (and arguably more suitable in our settings) notion of Pareto optimality is the Pareto frontier. Given a collection $A$, an outcome $x=(x_1,x_2)$ is \emph{$\varepsilon$-close to the Pareto frontier} if there is no alternative $y\in \conv(A)$ such that $y>>x+(\varepsilon,\varepsilon)$.


We argue that the mechanism $\mathcal{AV}_\delta$ can be modified to a similar mechanism whose outcomes are arbitrarily close to the Pareto \emph{frontier} in all equilibria.

A \emph{$k$-uniform} distribution over the alternatives $[n]$ is a uniform distribution over a multiset of size $k$ of alternatives in $[n]$. We denote by $k$-$UN([n])$ the set of all $k$-uniform distributions over\footnote{Note that the number of $k$-uniform different distributions is finite, and is equal to $\binom{n+k-1}{k-1}$.} $[n]$. 

In the modified mechanism $\mathcal{AV}^k_\delta$, each player submits a set of approved $k$-uniform distributions over alternatives. Namely, each player $i=1,2$ submits a list $L_i\subset k$-$UN([n])$. The mechanism $\mathcal{AV}^k_\delta$ chooses the outcome lottery exactly in the same way as $\mathcal{AV}_\delta$ does. The only difference, is that here we have a uniform distribution over $k$-uniform distributions, which induces a distribution over alternatives. 

\begin{proposition}\label{pro:sa-pf}
The mechanism $\mathcal{AV}^k_\delta$ admits a pure Nash equilibrium for every collection $A$, and $\mathcal{AV}^k_\delta$ is an anonymous mechanism which is $\left(\delta+\frac{1}{k}\right)$-close to the Pareto frontier in all equilibria.
\end{proposition}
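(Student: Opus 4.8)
The plan is to observe that $\mathcal{SA}^k_\delta$ run on the collection $A$ is \emph{literally the same game} as $\mathcal{SA}_\delta$ run on an enlarged collection, and then to invoke Corollary~\ref{cor:laa} together with a rounding estimate. Concretely, let $A^{(k)}$ be the collection of all expected-utility vectors of $k$-uniform distributions; that is, $A^{(k)}$ consists of the points $\frac{1}{k}\sum_{j=1}^k a^{i_j}$ as $(i_1,\dots,i_k)$ ranges over all size-$k$ multisets of $[n]$. Each element of $k\text{-}UN([n])$ is identified with its expected-utility vector in $A^{(k)}$. Because utilities are linear in the randomized allocation, the payoff that $\mathcal{SA}^k_\delta$ assigns to a pair of signals $(L_1,L_2)\subset k\text{-}UN([n])$ --- obtained by mixing uniformly, with the $\delta$-twist, over the chosen $k$-uniform distributions --- is exactly the payoff that $\mathcal{SA}_\delta$ assigns to the same pair of subsets viewed as subsets of the index set of $A^{(k)}$. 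Hence $\Gamma_{\mathcal{SA}^k_\delta}(A)=\Gamma_{\mathcal{SA}_\delta}(A^{(k)})$ as games, and in particular their sets of pure equilibria outcomes coincide. Note $A^{(k)}\subset[0,1]^2$ is a finite collection, so the earlier results apply to it.

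With this identification, existence of a pure Nash equilibrium and anonymity are immediate: Corollary~\ref{cor:laa} applied to the collection $A^{(k)}$ yields a pure equilibrium, and anonymity is a structural property of the allocation rule (the union/intersection rule is symmetric in $(L_1,L_2)$, and both players share the signal space $2^{k\text{-}UN([n])}$), so it is inherited unchanged.

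It remains to upgrade the efficiency guarantee. Corollary~\ref{cor:laa} tells us that every $x\in NEO_{\mathcal{SA}_\delta}(A^{(k)})$ is $\delta$-Pareto efficient \emph{with respect to the pure alternatives $A^{(k)}$}: there is no $z\in A^{(k)}$ with $z>>x+(\delta,\delta)$. I would convert this into $(\delta+\tfrac{1}{k})$-closeness to the Pareto frontier of $A$ by contraposition, resting on the claim that $A^{(k)}$ is a $\tfrac{1}{k}$-net (in $\ell_\infty$) of the Pareto frontier of $conv(A)$. Suppose some $y\in conv(A)$ satisfied $y>>x+(\delta+\tfrac{1}{k},\delta+\tfrac{1}{k})$. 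Replacing $y$ by a Pareto-maximal point $y'\in conv(A)$ with $y'\geq \geq y$ (which exists by compactness and still dominates), the point $y'$ lies on the Pareto frontier, hence on a single edge of the polygon $conv(A)$, so $y'=\lambda a^{j_1}+(1-\lambda)a^{j_2}$ is a convex combination of just \emph{two} vertices, both of which are among the $a^j$. Rounding $\lambda$ to the nearest multiple $\tfrac{m}{k}$ and setting $z=\tfrac{m}{k}a^{j_1}+\tfrac{k-m}{k}a^{j_2}\in A^{(k)}$ gives $\|z-y'\|_\infty\leq\tfrac{1}{2k}<\tfrac{1}{k}$, since the per-coordinate error is $|\tfrac{m}{k}-\lambda|\,|a^{j_1}_i-a^{j_2}_i|\leq\tfrac{1}{2k}$. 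Then for each coordinate $i$, $z_i>y'_i-\tfrac{1}{k}\geq y_i-\tfrac{1}{k}>x_i+\delta$, i.e.\ $z>>x+(\delta,\delta)$ with $z\in A^{(k)}$, contradicting the $\delta$-Pareto efficiency just established. Thus every equilibrium outcome is $(\delta+\tfrac{1}{k})$-close to the Pareto frontier.

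The only genuinely delicate step is this net estimate, and the reason I would route it through a Pareto-maximal $y'$ is precisely to keep the Carathéodory representation at \emph{two} terms: a naive three-term (planar Carathéodory) rounding of an interior point only yields an error of order $\tfrac{3}{2k}$, whereas on an edge two terms give the clean $\tfrac{1}{k}$ the statement demands. The remaining bookkeeping --- that vertices of $conv(A)$ are among the $a^j$, that the rounded multiset has total size $k$, and that the game identification preserves payoffs exactly --- is routine.
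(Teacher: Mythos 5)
Your proposal is correct and takes essentially the same route as the paper: identify $\Gamma_{\mathcal{SA}^k_\delta}(A)$ with $\Gamma_{\mathcal{SA}_\delta}$ played over the collection of expected outcomes of $k$-uniform distributions, invoke Corollary~\ref{cor:laa} there for existence, anonymity, and $\delta$-Pareto efficiency, and then convert this to $\left(\delta+\frac{1}{k}\right)$-closeness to the Pareto frontier by observing that the two-point mixtures $\frac{m}{k}a+\frac{k-m}{k}b$ along edges of the frontier $\delta$-dominate any point that is $\left(\delta+\frac{1}{k}\right)$-dominated within $\conv(A)$. Your explicit rounding estimate (the $\frac{1}{2k}$ bound obtained by passing to a Pareto-maximal dominating point on a single edge) just makes rigorous the geometric step the paper delegates to Figure~\ref{pic:pf}.
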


\begin{proof}
The mechanism $\mathcal{AV}^k_\delta$ over the collection $A$ is identical to the mechanism $\mathcal{AV}_\delta$ over the collection $k$-$UN(A)$, where $k$-$UN(A)=\{\mathbb{E}_{i\sim \mu} (a_i):\mu \text{ is a } k \text{-uniform distributoin over } [n]\}$ is the set of expected outcomes under $k$-uniform distributions over $A$. By Corollary \ref{cor:laa}, this proves existence of pure Nash equilibrium.

\begin{figure}[h]
\begin{center}
\includegraphics[scale=1]{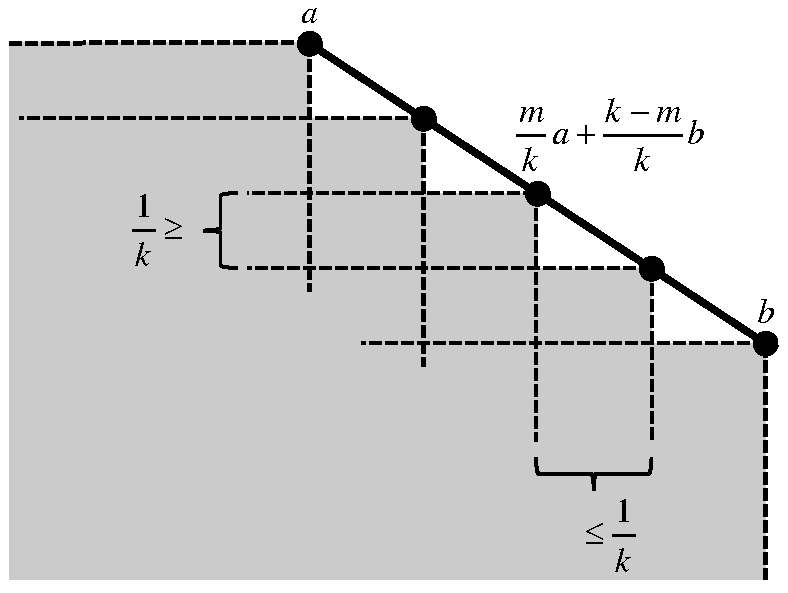}
\end{center}
\caption{Points that are Pareto dominated by $k$-uniform outcomes.}\label{pic:pf}
\end{figure}

For every line $\conv(a,b)$ on the Pareto frontier, where $a,b\in PE(A)$, the outcomes $\{ \frac{m}{k} a+\frac{k-m}{k} b \}_{m=1}^k$ are $k$-uniform distribution outcomes on the Pareto frontier. Figure ~\ref{pic:pf} demonstrates that every point that is $\frac{1}{k}$-far from 
the Pareto frontier is Pareto dominated by one of such outcomes $\frac{m}{k} a+\frac{k-m}{k} b$. Therefore $\delta$-Pareto efficiency with respect to $k$-$UN(A)$ implies $\left(\delta+\frac{1}{k}\right)$-closeness to the Pareto frontier of $A$. By Corollary \ref{cor:laa}, all equilibria of the mechanism $\mathcal{AV}_\delta$ over the collection $k$-$UN(A)$ are $\delta$-Pareto efficient (with respect to $k$-$UN(A)$), which implies that all equilibria of the mechanism $\mathcal{AV}^k_\delta$ over the collection $A$ are $\left(\delta+\frac{1}{k}\right)$-close to the Pareto frontier. 
\end{proof}

\subsection{Proof of Theorem \ref{theo:laa}}\label{sec:proof}
We start with introducing several additional notations. 
For a set of utility profiles $S\subset A$ we denote by $I_S=\{i\in [n]:a_i\in S\}$ the corresponding set of alternatives. In the opposite direction, for a set of alternatives $L$, we denote by $A_L\subset A=\{a_l\in A:l\in L\}$ the corresponding set of utility profiles. 

For a fixed-point $x$ which includes the boundary point $B\subset A_{\leq ,\leq}(x)\setminus A_{<,<}(x)$ (i.e., $\avg((A\setminus A_{\leq, \leq}(x))\cup B)=x$) we partition the boundary points in $B$ into two sets $B_i=\{a\in B:a_i=x_i\}$, where $B_1\cupdot B_2=B$.

We start with showing that every outcome $x=(x_1,x_2) \in DIS(A)$ is a disagreement equilibrium outcome. 

We split the utility profiles in $A\setminus A_{\leq, \leq}(x)$ into two groups: 
\begin{align*}
D_i=\{a \in A: a_i > x_i\} \text{ for } i=1,2.
\end{align*}
The sets $B_i$ and $D_i$ are demonstrated in Figure ~\ref{pic:dis}.

\begin{figure}[h]
\begin{center}
\includegraphics[scale=1]{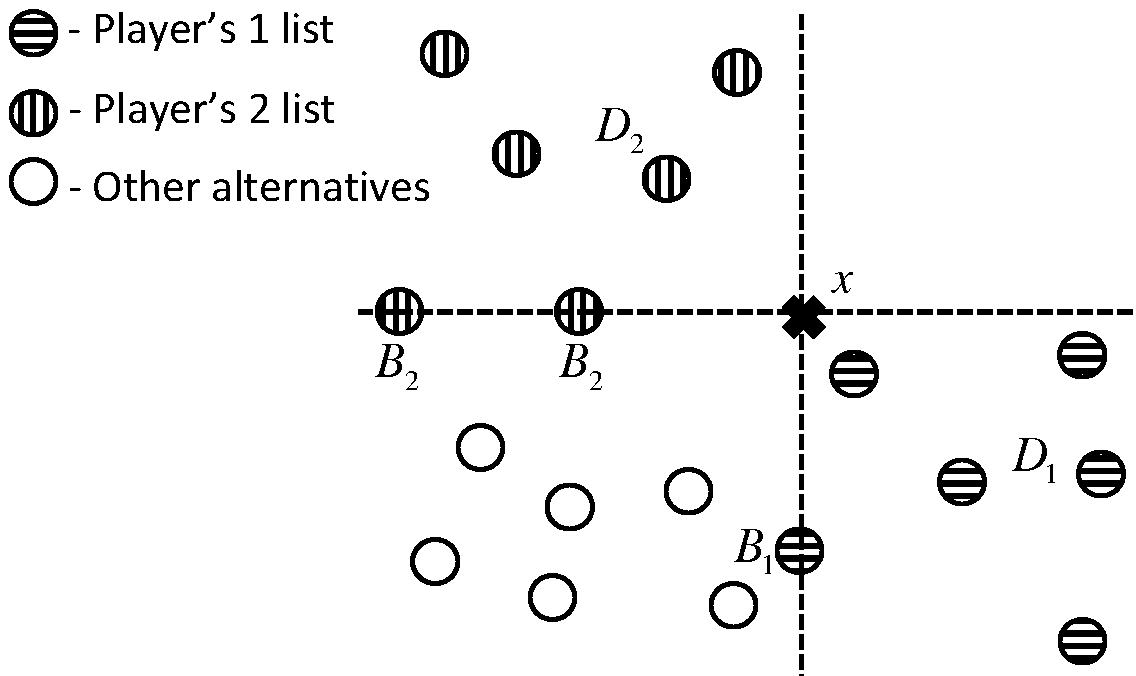}
\end{center}
\caption{A disagreement equilibrium.}\label{pic:dis}
\end{figure}

The fixed point $x$ belongs to $DIS(A)$, therefore, there is no  $a\in A$ such that $a >> x$. So the sets $B_1\cup D_1$ and $B_2 \cup D_2$ are disjoint. Therefore the payoff profile for the profile $(I_{B_1\cup D_1}, I_{B_2 \cup D_2})$ is $x$ (because $x$ is a fixed point). We argue that the action profile $(I_{B_1\cup D_1}, I_{B_2 \cup D_2})$ is a Nash equilibrium. Player $1$ approves all the above-average alternatives ($D_1$) and disapprove all the below-average alternatives. Therefore, player $1$ cannot increase his payoff by remaining in a disagreement. Note that all the alternatives in $B_2 \cup D_2$ are below-average alternatives for player 1. Therefore, every agreement will reduce the payoff of player 1. Symmetric arguments prove that player 2 has no profitable deviation.

Before we show that every outcome in $AG(A)$ is a disagreement equilibrium outcome, we introduce a Lemma that will be useful in its proof.

\begin{lemma}\label{lem:max-dis}
Let $x$ be an average fixed point and let $S_2\subset A$ be a list (of player 2) that approves all the alternatives in $A_{<,>}(x)\cup B_2$ and disapprove all the alternatives in $A_{\leq ,\leq }(x)\setminus B_2$. Then 
\begin{align*}
\max_{S_1\subset A} \avg_1(S_1 \cup S_2)=x_1.
\end{align*}
\end{lemma}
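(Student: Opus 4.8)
The plan is to reduce the optimization over player 1's list to a statement about a single, explicitly identified set together with the fixed-point identity. Write $A^\ast = (A\setminus A_{\leq,\leq}(x))\cup B$ for the averaging set witnessing that $x$ is an average fixed point, so that $\avg(A^\ast)=x$; in particular $\avg_1(A^\ast)=x_1$, which I will use in the equivalent additive form $\sum_{a\in A^\ast}(a_1-x_1)=0$. Throughout I partition $A$ by first coordinate into the high, equal, and low parts $H=\{a\in A: a_1>x_1\}$, $E=\{a\in A: a_1=x_1\}$, and $\ell=\{a\in A: a_1<x_1\}$.

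First I would establish $S_2\subseteq A^\ast$. Since $S_2$ excludes $A_{\leq,\leq}(x)\setminus B_2$ and $B_2\subseteq B$, we have $S_2\subseteq A\setminus(A_{\leq,\leq}(x)\setminus B_2)\subseteq A\setminus(A_{\leq,\leq}(x)\setminus B)=A^\ast$. This immediately yields the lower bound: taking $S_1=A^\ast$ gives $S_1\cup S_2=A^\ast$, hence $\avg_1(S_1\cup S_2)=x_1$, so the maximum is at least $x_1$.

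The core of the argument is the matching upper bound $\avg_1(S_1\cup S_2)\leq x_1$ for every $S_1$; since $\avg_1(U)\leq x_1$ is equivalent to $\sum_{a\in U}(a_1-x_1)\leq 0$, it suffices to bound this additive quantity for $U=S_1\cup S_2$. I split the sum over $U\cap H$, $U\cap E$, $U\cap \ell$. The equal part contributes $0$. The high part is bounded by including every high point, $\sum_{a\in U\cap H}(a_1-x_1)\leq \sum_{a\in H}(a_1-x_1)$, since all its terms are positive. For the low part, the defining constraints force the low alternatives of $S_2$: the only low points $S_2$ may contain are those of $A_{<,>}(x)\cup B_2$, and it must contain all of them, so $U\cap \ell\supseteq A_{<,>}(x)\cup(B_2\cap \ell)$; as these terms are negative, $\sum_{a\in U\cap \ell}(a_1-x_1)\leq \sum_{a\in A_{<,>}(x)\cup(B_2\cap \ell)}(a_1-x_1)$. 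Adding the three bounds and noting that $H$, $A^\ast\cap E$, and $A_{<,>}(x)\cup(B_2\cap \ell)=A^\ast\cap \ell$ constitute exactly the first-coordinate partition of $A^\ast$, the right-hand side collapses to $\sum_{a\in A^\ast}(a_1-x_1)=0$.

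The one step needing care — and the main obstacle — is verifying that the low part of $A^\ast$ is precisely $A_{<,>}(x)\cup(B_2\cap \ell)$ and that this coincides with the forced low part of $S_2$; this identification is what makes the high and low contributions cancel through the fixed-point identity. It is exactly here that the apparent freedom of $S_2$ on the high and equal alternatives turns out to be harmless: no matter which high points $S_2$ carries, the union can contain at most all of $H$, and the forced low points alone already balance them to $x_1$. Combining the two bounds gives $\max_{S_1\subseteq A}\avg_1(S_1\cup S_2)=x_1$.
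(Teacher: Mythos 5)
Correct, and essentially the paper's own argument: you exhibit an $S_1$ for which $\avg_1(S_1\cup S_2)=x_1$ via the fixed-point identity (the paper uses $S_1=\{a\in A: a_1\geq x_1\}$, you use the averaging set itself, which differs only in alternatives with $a_1=x_1$ that do not affect $\avg_1$), and you prove maximality from the facts that the union contains at most all above-average alternatives and is forced to contain the below-average alternatives $A_{<,>}(x)\cup(B_2\cap\ell)$ that exactly balance them. The paper compresses your high/equal/low decomposition of $\sum_{a\in S_1\cup S_2}(a_1-x_1)$ into the single remark that eliminating an above-average or adding a below-average alternative reduces the average; your write-up just makes that monotonicity step explicit.
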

\begin{proof}
For $S_1=\{a\in A:a_1\geq x_1\}$ we have $\avg_1(S_1 \cup S_2)=x_1$, this is because $x$ is an average fixed point and every choice of the boundary points $\{a\in A:a_1=x_1\}$ does not effect $\avg_1$. This is also the \emph{maximal} value of $\avg_1(S_1 \cup S_2)=x_1$, because every disapprovement of above-average or approvement of below-average alternative will reduce the average.
\end{proof}

Now we show that every outcome $(1-\delta)a+\delta x \in AG(A)$ is an agreement equilibrium outcome. 
We denote by $R=A_{\leq, \leq}(a)\cap A_{\geq,\geq}(x)$ the utility profiles in the rectangle that is formed by the two points $a$ and $x$. We also denote $C_1=A_{>,\leq}(x_1,a_2)\setminus R$, and $C_2=A_{\leq,>}(a_1,x_2)\setminus R$.
Let $R=R_1 \cupdot R_2$ be an arbitrary partition of the utility profiles in $R$. The set $B_i,R$ and $C_i$ are demonstrated in Figure ~\ref{pic:ag}. 

\begin{figure}[h]
\begin{center}
\includegraphics[scale=0.9]{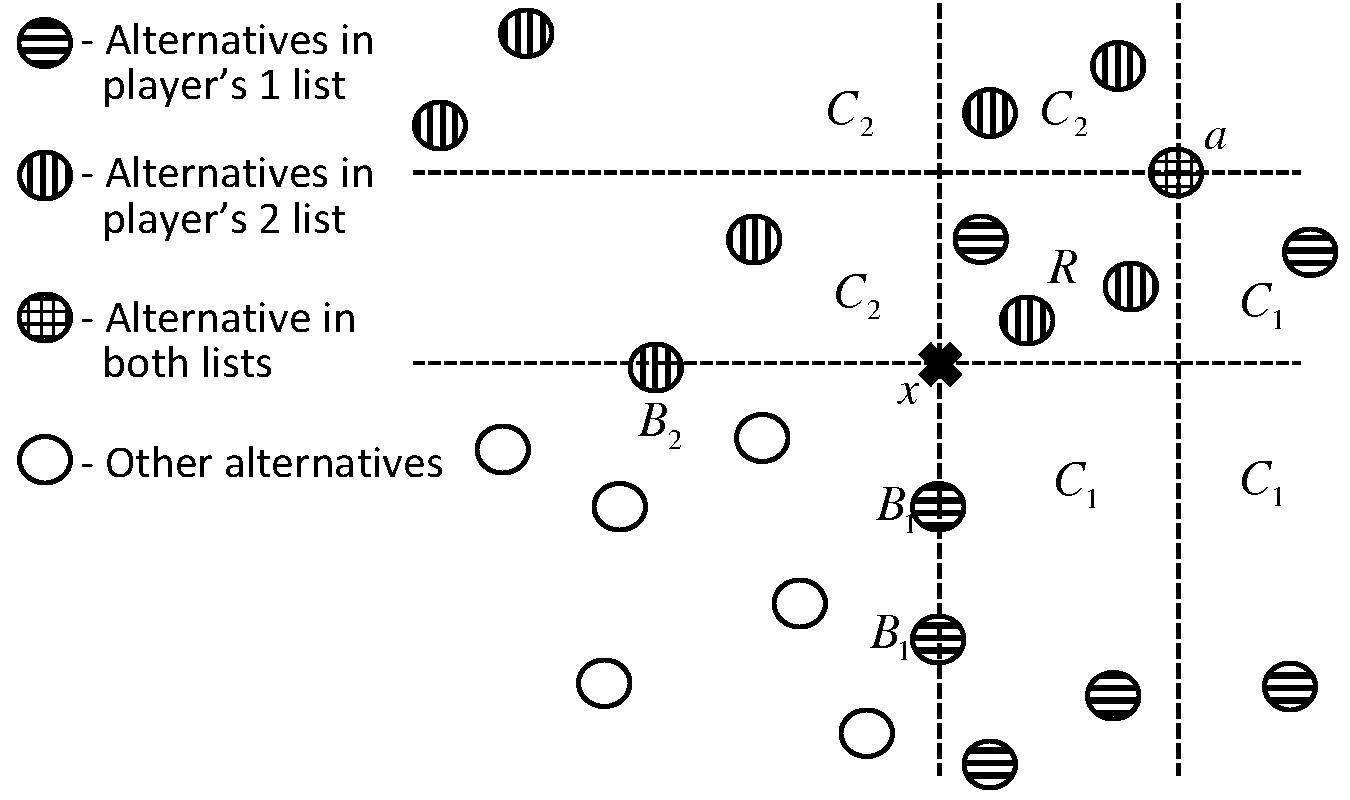}
\end{center}
\caption{An agreement equilibrium.}\label{pic:ag}
\end{figure}

We argue that the action profile $(L_1,L_2)=(I_{B_1 \cup C_1 \cup R_1 \cup \{a\}}, I_{B_2 \cup C_2 \cup R_2 \cup \{a\}})$ is an agreement equilibrium. First it is easy to check that $L_1 \cap L_2 = \{a\}$ and $L_1 \cup L_2= I_{(A\setminus A_{\leq, \leq}(x))\cup B}$, therefore the outcome is indeed $o=(1-\delta)a+\delta x$. Player 1 cannot improve the ``disagreement" payoff $x_1$ in $o_1$, because all the above-average alternatives are approved by one of the players (namely, belongs to $L_1\cup L_2$), not necessarily by player 1. Moreover, all the below-average alternatives are disapproved by player 1 (namely not in $L_1$). Now we show that the ``agreement" payoff $a_1$ cannot be improved by a unilateral deviation of player 1. Player $1$ can break the agreement (i.e., to disapprove $a$), by Lemma \ref{lem:max-dis} this will reduce his payoff to $x_1$ or less. Finally, player 1 cannot improve the agreement payoff $a_1$ by switching to another (or adding an additional) agreement alternative because for all $b\in L_2$, $b_1\leq a_1$. Symmetric arguments prove that player 2 has no profitable deviation.

Now we turn to the second part of the proof where we show that the constructed above equilibria are \emph{all} the pure Nash equilibria of the game.

We start with showing that in every agreement equilibrium the agreement outcome is unique:
\begin{lemma}\label{lem:unique}
Let $(L_1,L_2)$ be a pure Nash equilibrium such that $L_1\cap L_2 \neq \emptyset$, then for every $i,j \in L_1\cap L_2$, $a_i=a_j$.
\end{lemma}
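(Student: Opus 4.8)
The plan is to argue by contradiction through a payoff-improving deviation in which one player prunes a ``bad'' index from the agreement while leaving the union untouched. Write $T=L_1\cap L_2$ for the index set of the agreement, so by hypothesis $T\neq\emptyset$, and let $\avg_i$ denote the average of the $i$-th coordinate over a set of alternatives. Suppose, toward a contradiction, that the alternatives indexed by $T$ are not all equal as utility vectors. Then there is a coordinate $i\in\{1,2\}$ for which the values $\{a_i : a\in A_T\}$ are not all equal; after relabeling the players we may take $i=1$ (if instead the non-constant coordinate is $i=2$, the symmetric deviation for player $2$ applies). In particular $|T|\geq 2$, and since a non-constant list of numbers has its minimum strictly below its mean, there is an index $k\in T$ with $a^k_1 < \avg_1(A_T)$.

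Next I would exhibit player $1$'s deviation: replace $L_1$ by $L_1'=L_1\setminus\{k\}$. The essential bookkeeping step is that this leaves the union unchanged while shrinking the agreement by exactly $k$. Indeed, because $k\in T\subseteq L_2$, we have $L_1'\cup L_2 = L_1\cup L_2$, whereas $L_1'\cap L_2 = T\setminus\{k\}$. Since $|T|\geq 2$, the set $T\setminus\{k\}$ is still nonempty, so the deviation keeps the players in the agreement regime, and the outcome is governed by the third branch of $f_n$ both before and after the deviation.

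Finally I would compare payoffs. Player $1$'s payoff is $(1-\delta)\,\avg_1(A_T)+\delta\,\avg_1(A_{L_1\cup L_2})$ before the deviation and $(1-\delta)\,\avg_1(A_{T\setminus\{k\}})+\delta\,\avg_1(A_{L_1\cup L_2})$ after, the $\delta$-term being identical because the union is preserved. Removing a strictly below-mean element strictly raises the mean, i.e.\ $\avg_1(A_{T\setminus\{k\}})>\avg_1(A_T)$, so the deviation improves player $1$'s payoff by $(1-\delta)\bigl(\avg_1(A_{T\setminus\{k\}})-\avg_1(A_T)\bigr)>0$, contradicting that $(L_1,L_2)$ is a Nash equilibrium; hence all alternatives in $A_T$ coincide, which is the claim. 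The main obstacle, and the only place strictness is used, is the factor $1-\delta$: the argument yields a strict gain precisely when $\delta<1$. For $\delta=1$ the agreement and disagreement branches of $f_n$ coincide (both equal $UN(L_1\cup L_2)$), so pruning the agreement is payoff-neutral; in that degenerate case the statement is not obtained from this deviation, and I would note explicitly that such an equilibrium produces the same outcome as a disagreement profile, so it is irrelevant to the outcome characterization of Theorem~\ref{theo:laa}.
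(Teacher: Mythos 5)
Your core argument is correct and is essentially the paper's own proof: both remove a strictly-below-average index from the intersection, observe that the union (hence the $\delta$-weighted term) is unchanged because the opponent still holds that index, and conclude that the agreement average strictly rises, a strict gain since the weight $1-\delta$ is positive. Your bookkeeping is in fact slightly more careful than the paper's (you note $|T|\geq 2$ so the deviation stays in the agreement regime).

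Your closing remark on $\delta=1$, however, gets the diagnosis half right. You correctly observe that the deviation is payoff-neutral at $\delta=1$ (the paper's proof has the same limitation, even though Theorem~\ref{theo:laa} is stated for $0<\delta\leq 1$ without comment, so you have caught a real boundary issue). But your conclusion that the case is ``irrelevant to the outcome characterization'' is wrong: at $\delta=1$ both Lemma~\ref{lem:unique} and the theorem genuinely fail. Take $A=((1,1),(0,0))$ and $L_1=L_2=\{1,2\}$. Every unilateral deviation leaves the union equal to $\{1,2\}$, since each index is still held by the opponent, so at $\delta=1$ all deviations are payoff-neutral and this profile is a pure Nash equilibrium with intersection containing two distinct alternatives and outcome $(\frac{1}{2},\frac{1}{2})$. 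Yet $AFP(A)=\{(1,1)\}$, so $(\frac{1}{2},\frac{1}{2})\notin AG(A)\cup DIS(A)$. The flaw in your ``same outcome as a disagreement profile'' argument is that the matching disagreement profile need not be an equilibrium: a below-average alternative sitting in the intersection cannot be removed from the union unilaterally, so at $\delta=1$ the union of an equilibrium need not form an average fixed point, and Lemma~\ref{lem:union} (whose proof invokes Lemma~\ref{lem:unique}) breaks as well. So the restriction to $\delta<1$ is substantive, not cosmetic.
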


\begin{proof}
Assume that $L_1\cap L_2\neq \emptyset$ and the intersection contains alternatives with different utility profiles. Without loss of generality, we assume that not all outcomes of player 1 are identical at the alternatives $L_1\cap L_2$, and that $i\in L_1\cap L_2$ obtains the minimal utility for player 1. Player 1 will gain by a disapproval of alternative $i$ (i.e., $L'_1=L_1\setminus \{ i\}$) because $\avg_1(A_{L'_1\cup L_2})=\avg_1(A_{L_1\cup L_2})$ (because the union remains unchanged) and $\avg_1(A_{L'_1\cap L_2})>\avg_1(A_{L_1\cap L_2})$. 
\end{proof}
Now we claim that for every equilibrium the union of the approved sets forms an average fixed point.
\begin{lemma}\label{lem:union}
Let $(L_1,L_2)$ be a pure Nash equilibrium and let $x=\avg(A_{L_1 \cup L_2})$, then $x$ is an average fixed point; i.e., $A\setminus A_{\leq, \leq}(x)\subset A_{L_1 \cup L_2} \subset A \setminus A_{<,<}(x)$.
\end{lemma}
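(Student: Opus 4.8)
The plan is to prove the two inclusions separately, each by contradiction through an explicit unilateral deviation, exploiting the fact that in \emph{every} equilibrium the coordinate $\avg_i(A_{L_1\cup L_2})=x_i$ enters player $i$'s payoff with strictly positive weight. Indeed, if $(L_1,L_2)$ is a disagreement ($L_1\cap L_2=\emptyset$) then player $i$'s payoff is exactly $x_i$, whereas if it is an agreement then player $i$'s payoff is $(1-\delta)\avg_i(A_{L_1\cap L_2})+\delta x_i$, so the union average carries the positive weight $\delta$ and the intersection average the weight $1-\delta$. (We may assume the union is nonempty, so that $x$ is defined; the profile $(\emptyset,\emptyset)$ is handled trivially.) The two levers available to player $i$ are: adding an index absent from $L_{3-i}$, which enlarges the union without touching the intersection; and deleting an index from $L_i$, which shrinks the union and possibly the intersection. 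I would track the effect of each move on $\avg_i$ of the union, using the elementary fact that appending an alternative whose $i$-coordinate exceeds $x_i$, or deleting one whose $i$-coordinate is below $x_i$, strictly raises $\avg_i(A_{L_1\cup L_2})$.

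For the inclusion $A\setminus A_{\leq,\leq}(x)\subset A_{L_1\cup L_2}$, suppose some $a\in A$ with $a_1>x_1$ is missing from the union. Then $a\notin L_2$, so player $1$ may add its index to $L_1$; this leaves $L_1\cap L_2$ unchanged and enlarges the union by the single above-average alternative $a$, strictly increasing $\avg_1(A_{L_1\cup L_2})$ and hence, in both the agreement and the disagreement regime, player $1$'s payoff---contradicting equilibrium. Thus every alternative with $a_1>x_1$ lies in the union, and symmetrically every alternative with $a_2>x_2$ lies in the union; together these exhaust $A\setminus A_{\leq,\leq}(x)$.

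For the reverse inclusion $A_{L_1\cup L_2}\subset A\setminus A_{<,<}(x)$, suppose some $a\in A_{<,<}(x)$ (so $a_1<x_1$ and $a_2<x_2$) lies in the union. If $a$ belongs to only one list, say $a\in L_1\setminus L_2$, then player $1$ deletes it: the intersection is untouched and a below-average alternative leaves the union, strictly raising $\avg_1(A_{L_1\cup L_2})$ and player $1$'s payoff, a contradiction. The main obstacle is the case $a\in L_1\cap L_2$, where neither player can evict $a$ from the union by editing only his own list, since it survives in the opponent's. Here I would instead break the agreement outright: by Lemma~\ref{lem:unique} all alternatives indexed by $L_1\cap L_2$ share a single outcome $c$ (with $c=a$), so deleting \emph{every} index of $L_1\cap L_2$ from $L_1$ converts the profile to a disagreement while leaving the union unchanged (those indices persist in $L_2$). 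Player $1$'s payoff then moves from $(1-\delta)c_1+\delta x_1$ to $x_1$, a strict gain of $(1-\delta)(x_1-c_1)>0$ since $c_1=a_1<x_1$ (using $\delta<1$)---again contradicting equilibrium. Hence no alternative of $A_{<,<}(x)$ appears in the union, which completes the proof.
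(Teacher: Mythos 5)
Your proof is correct and follows essentially the same route as the paper's: the first inclusion by adding a missing above-average alternative to one's own list, and the second by deleting a below-average alternative, split into the sole-list case and the case $a\in L_1\cap L_2$, where (as in the paper) Lemma~\ref{lem:unique} is invoked to break the agreement profitably. Your variant of the intersection case (deleting exactly the indices of $L_1\cap L_2$ so the union is unchanged) is a harmless cosmetic difference from the paper's deviation (which also drops identical alternatives from $L_1\setminus L_2$), and your explicit computation of the gain $(1-\delta)(x_1-c_1)$ with the flag that strictness needs $\delta<1$ makes precise a requirement the paper's own Case 2 (and its proof of Lemma~\ref{lem:unique}) uses only implicitly.
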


\begin{proof}
If there exists an alternative $l$ such that $l\notin L_1\cup L_2$ and $a^l_i>x_i$ then player $i$ has a profitable deviation to $L_i\cup \{l\}$. Therefore, $A\setminus A_{\leq ,\leq}(x)\subset A_{L_1 \cup L_2}$.

If there exists an alternative $l\in L_1\cup L_2$ such that $a_l<<x$ then we consider two cases.

\begin{itemize}
\item[Case 1:] $l \in L_i$ but $l \notin L_{3-i}$. Then player $i$ has a profitable deviation to $L_i\setminus \{l\}$.
\item[Case 2:] $l\in L_1\cap L_2$. Then by Lemma~\ref{lem:unique} $a_l$ is the unique agreement outcome. Player 1 has a profitable deviation to a disagreement by excluding the alternative $l$ (and all the identical alternatives $k$ such that $a_k=a_l$) from his approval set.
\end{itemize} 
Therefore $A_{L_1 \cup L_2} \subset A \setminus A_{<,<}(x)$.
\end{proof}
The following lemma shows that the agreement outcome (if exists) is better than the disagreement outcome, and it is efficient:
\begin{lemma}\label{lem:ag}
Let $(L_1,L_2)$ be a pure Nash equilibrium such that $L_1\cap L_2 \neq \emptyset$ and let $a\in A_{L_1\cap L_2}$, then
\begin{enumerate}
\item $a\geq \geq \avg(A_{L_1 \cup L_2})$.
\item $a\in PE(A)$.
\end{enumerate} 
\end{lemma}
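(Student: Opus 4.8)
The plan is to first pin down the exact equilibrium outcome and then rule out the two ways the claims could fail by exhibiting explicit profitable deviations. Set $x=\avg(A_{L_1\cup L_2})$, so that by definition $\avg_i(A_{L_1\cup L_2})=x_i$. By Lemma~\ref{lem:unique} the intersection carries a single outcome, say $A_{L_1\cap L_2}=\{a\}$, so the realized allocation $(1-\delta)UN(L_1\cap L_2)+\delta UN(L_1\cup L_2)$ gives each player $i$ the payoff $o_i=(1-\delta)a_i+\delta x_i$. I would also keep at hand the sandwich from Lemma~\ref{lem:union}, namely $A\setminus A_{\leq,\leq}(x)\subset A_{L_1\cup L_2}\subset A\setminus A_{<,<}(x)$, which says in particular that every alternative lying strictly above $x$ already appears in $L_1\cup L_2$.

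For claim (1) I would use a \emph{break-agreement} deviation: player $i$ deletes from his list all indices of $L_1\cap L_2$. Because those indices also sit in $L_{3-i}$, the union is unchanged while the intersection becomes empty, so the deviation yields a disagreement with payoff exactly $\avg_i(A_{L_1\cup L_2})=x_i$. Equilibrium then forces $o_i\geq x_i$, i.e. $(1-\delta)a_i+\delta x_i\geq x_i$, hence $(1-\delta)a_i\geq(1-\delta)x_i$, which gives $a_i\geq x_i$ as long as $\delta<1$; thus $a\geq\geq x$.

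For claim (2) suppose for contradiction that $b>>a$ for some $b\in A$. By claim (1) we then have $b>>x$, so $b$ lies strictly above $x$ and the sandwich places the index $i_b$ of $b$ in $L_1\cup L_2$; uniqueness of the agreement forbids $i_b\in L_1\cap L_2$ (that would be a second agreement outcome $b\neq a$), so $i_b$ lies in exactly one list, say $L_2$. Now let player $1$ \emph{shift the agreement} by replacing $L_1$ with $L_1'=(L_1\setminus(L_1\cap L_2))\cup\{i_b\}$. A short check gives $L_1'\cap L_2=\{i_b\}$ and $L_1'\cup L_2=L_1\cup L_2$, so the unique new agreement is $b$ while the disagreement fallback is still $x$; player $1$'s payoff rises to $(1-\delta)b_1+\delta x_1>(1-\delta)a_1+\delta x_1=o_1$, contradicting equilibrium. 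The subcase $i_b\in L_1$ is symmetric, with player $2$ deviating, so $a\in PE(A)$.

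The step that needs the most care is the \emph{invariance of the union} under both deviations: the indices removed (the old agreement) and the index added (that of $b$) all already appear in the opponent's list, so the union—and therefore the disagreement value $x$ that anchors the $\delta$-term of $o$—does not move, which is precisely what lets each deviation alter only the agreement component. The remaining bookkeeping is tracking which list contains $i_b$, handled by the symmetric split above; note finally that the strict improvements in both parts use $1-\delta>0$, matching the regime underlying Lemmas~\ref{lem:union} and~\ref{lem:unique}.
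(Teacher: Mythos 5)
Your proof is correct and takes essentially the same route as the paper: claim (1) via the break-agreement deviation (deleting the intersection from one's own list, which leaves the union and hence the disagreement payoff $x_i$ unchanged since those indices remain in the opponent's list), and claim (2) via Lemmas \ref{lem:unique} and \ref{lem:union} to locate the index of the dominating alternative $b$ in exactly one list and redirect the agreement to it---the paper simply adds that index to enlarge the intersection rather than swapping out the old agreement, an immaterial variant. Your explicit caveat that strictness needs $1-\delta>0$ is likewise implicit in the paper's own improvement arguments, so nothing is missing.
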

\begin{proof}
1. Assume to the contrary that $a_1<\avg_1(A_{L_1 \cup L_2})$. By Lemma \ref{lem:unique} $a$ is the unique agreement outcome. Player 1 has a profitable deviation to a disagreement by excluding the alternative $I_a$ (and all the identical alternatives $I_b$ that are played by player 2) from his set. A similar argument excludes the possibility of $a_2<\avg_2(A_{L_1 \cup L_2})$.

2. Assume to the contrary that there exists $a'>>a$. By (1) we know that $a'>> a\geq \geq \avg(A_{L_1 \cup L_2})$. By Lemma \ref{lem:union} we get that $a'\in A_{L_1 \cup L_2}$. Without loss of generality we assume $a'\in L_2$. Then player $1$ can increase his payoff by including $a'$ into the set of agreements.
\end{proof}

The following lemma shows that a disagreement equilibrium has to be efficient:
\begin{lemma}\label{lem:dis}
Let $(L_1,L_2)$ be a pure Nash equilibrium such that $L_1\cap L_2 = \emptyset$, then there is no $a\in A$ such that $a>>\avg(A_{L_1 \cup L_2})$.
\end{lemma}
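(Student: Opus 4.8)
The plan is to argue by contradiction. Suppose $(L_1,L_2)$ is a disagreement equilibrium, so $L_1\cap L_2=\emptyset$ and (assuming for the moment $L_1\cup L_2\neq\emptyset$) the outcome is $x=\avg(A_{L_1\cup L_2})$, and suppose toward a contradiction that some alternative $a\in A$ satisfies $a>>x$. I will manufacture a profitable unilateral deviation. The whole point of the $\delta$-perturbation is precisely to make such a deviation pay off, so I expect the argument to be short once the dominating alternative is correctly located.

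First I would locate $a$ among the submitted indices. Since $a_1>x_1$ we have $a\notin A_{\leq,\leq}(x)$, so Lemma~\ref{lem:union}, which gives $A\setminus A_{\leq,\leq}(x)\subset A_{L_1\cup L_2}$, forces an index $i_a$ carrying value $a$ into the union $L_1\cup L_2$. Because the profile is a disagreement, $L_1\cap L_2=\emptyset$, so $i_a$ lies in exactly one of the two lists; by the symmetry of the two players assume without loss of generality $i_a\in L_2\setminus L_1$.

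The key deviation is for player~1 to append $i_a$ to his own list, i.e.\ to play $L_1'=L_1\cup\{i_a\}$. This converts the disagreement into an agreement on the single alternative $a$: since $L_1\cap L_2=\emptyset$ and $i_a\in L_2$ we get $L_1'\cap L_2=\{i_a\}$, while the union is unchanged, $L_1'\cup L_2=L_1\cup L_2$, because $i_a$ already belonged to $L_2$. Hence player~1's payoff after deviating is the agreement payoff $(1-\delta)\avg_1(A_{\{i_a\}})+\delta\,\avg_1(A_{L_1\cup L_2})=(1-\delta)a_1+\delta x_1$. As $a_1>x_1$ and $1-\delta>0$, this strictly exceeds $x_1$, contradicting that player~1 was best responding; this proves the lemma for every $0<\delta<1$. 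The degenerate case $L_1\cup L_2=\emptyset$ (outcome $\avg(A)=x$) is handled separately by the singleton deviation $L_1'=\{i_a\}$ against $L_2=\emptyset$, which creates a disagreement with union $\{i_a\}$ and payoff $a_1>x_1$.

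The main obstacle is the endpoint $\delta=1$, where the computed gain $(1-\delta)(a_1-x_1)$ collapses to zero and the deviation is no longer strictly profitable; indeed at $\delta=1$ the agreement and disagreement randomized allocations coincide (both are $UN(L_1\cup L_2)$), and one can check directly that disagreement equilibria with a strictly dominating alternative genuinely exist. In that case I would not read the lemma literally but instead observe that, the two randomized allocations being identical, such an outcome $x$ is simultaneously realizable as an \emph{agreement} equilibrium outcome, so that $x\in AG(A)$; this is all that the downstream proof of Theorem~\ref{theo:laa} (the inclusion $NEO(A)\subseteq AG(A)\cup DIS(A)$) actually requires. A secondary, purely bookkeeping point is to be careful when several indices carry the same value $a$: one only needs a single such index in the union, which Lemma~\ref{lem:union} supplies, and appending exactly that one index keeps the intersection equal to $\{i_a\}$.
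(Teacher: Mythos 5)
Your proof is correct and is essentially the paper's own argument: by Lemma~\ref{lem:union} the dominating alternative $a$ lies in $A_{L_1\cup L_2}$, without loss of generality via an index in $L_2$, and player 1 profitably turns the disagreement into an agreement on $a$, gaining $(1-\delta)(a_1-x_1)>0$. Your $\delta=1$ caveat is moreover a genuine catch rather than pedantry: the paper states Theorem~\ref{theo:laa} for $0<\delta\leq 1$ but its one-line proof of this lemma implicitly needs $\delta<1$ (for instance, with $A=((1,0),(0,1),(0.9,0.9))$ and $\delta=1$ the disagreement profile $(\{1,3\},\{2\})$ is a pure Nash equilibrium whose outcome $(\frac{1.9}{3},\frac{1.9}{3})$ is strictly dominated by $(0.9,0.9)$), and your repair --- that at $\delta=1$ such an outcome is an average fixed point weakly dominated by a Pareto efficient alternative and hence lands in $AG(A)$, which is all the downstream argument needs --- is the right one.
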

\begin{proof}
Assume to the contrary that there exists $a>> \avg(A_{L_1\cup L_2})$. By Lemma \ref{lem:union} we get that $a\in A_{L_1 \cup L_2}$. Without loss of generality we assume $a\in L_2$. Then player $1$ can increase his payoff by adding $a$ into his set and turning the disagreement into agreement on $a$.
\end{proof}
Summarizing, every disagreement equilibrium outcome is an average fixed point (Lemma \ref{lem:union}) and is not dominated by any utility profile (Lemma \ref{lem:dis}) which restricts the set of outcomes to be in $DIS(A)$.  In every agreement equilibrium the agreement is on a unique outcome (Lemma \ref{lem:unique}) which is Pareto efficient (Lemma \ref{lem:ag}). In addition, the set $L_1\cup L_2$ forms an average fixed point (Lemma \ref{lem:union}), and $\avg(L_1\cup L_2)$ is Pareto dominated by the agreement outcome (Lemma \ref{lem:ag}). This restricts the set of outcomes to be in $AG(A)$.

\section{Weighted approval voting mechanisms}\label{sec:non-unif}

The approval voting mechanism chooses an alternative \emph{uniformly} at random from the intersection, or from the union. For the proofs of the results, it is sufficient to assume that the the distribution by which the mechanism chooses the alternative (from the intersection or from the union) assign \emph{positive probability} to each alternative.

For instance we can consider a mechanism where each alternative $i$ has a weight $w_i>0$. In case the sets are disjoint ($L_1\cap L_2 =\emptyset$), the mechanism chooses the alternative $j\in L_1\cup L_2$ with probability $w_j/\sum_{i\in L_1\cup L_2} w_i$. In case the sets intersect ($L_1\cap L_2 \neq \emptyset$), the mechanism chooses the alternative $j\in L_1\cup L_2$ with probability $\delta w_j/\sum_{i\in L_1\cup L_2} w_i$, and it chooses an alternative $j\in L_1\cap L_2$ with probability $(1-\delta) w_j/\sum_{i\in L_1\cap L_2} w_i$. We denote this mechanism by $\mathcal{AV}_\delta^w$ where $w=(w_1,...,w_n)$.

We argue that all the arguments that hold for $\mathcal{AV}_\delta$ can be translated to arguments on $\mathcal{AV}_\delta^w$. This is because all the arguments rely on the fact that inclusion (exclusion) of below-average (above-average) numbers into (from) the calculation of the average, reduces the average. This basic fact remains true also in the case of weighted averages.

More formally, the analogue of an average is the weighted average 
\begin{align*}
\avg^w(B)=\frac{1}{\sum_{a^i \in B} w_i} \sum_{a^i \in B} w_i a^i
\end{align*}
The analogue of average fixed point is the notion of \emph{weighted average fixed point}.

\begin{definition}
Given collection $A$ and weights $w$, a vector $x=(x_1,x_2)$ is a \emph{weighted average fixed point} of the collection $A$ if there exists a subset $B\subset A_{\leq, \leq}(x)\setminus A_{<,<}(x)$ such that $\avg^w((A\setminus A_{\leq, \leq}(x)) \cup B)=x$.
We denote by $AFP^w(A)$ the set of weighted average fixed points. 
\end{definition}
Similarly, we can define the notion of minimal weighted average fixed point.
\begin{definition}
A point $m^w_{afp}(A) \in [0,1]^2$ is a minimal weighted average fixed point if $m^w_{afp} \in AFP^w(A)$, and for every $y\in AFP^w(A)$ holds $y \geq \geq m^w_{afp}$.
\end{definition}
Using similar arguments to those in Section \ref{sec:afp}, we can prove that minimal average fixed point exists and is unique. 

We have an analogous characterization of equilibrium outcomes for the case of weighted approval voting mechanisms.

\begin{theorem}
For every collection $A$, every $0< \delta \leq 1$ and every weight vector $w$, the set of pure Nash equilibria outcomes of the mechanism  $\mathcal{AV}^w_\delta$ is exactly the union of the following two types of equilibrium outcomes:
\begin{enumerate}
\item The set of agreement equilibrium outcomes 
\begin{align*}
AG(A)=\{(1-\delta)a+\delta x: a\in PE(A), x \in AFP^w(A), \text{ and } a\geq \geq x\}.
\end{align*}
\item The set of disagreement equilibrium outcomes 
\begin{align*}
DIS(A)=\{x\in AFP^w(A): \text{There is no } a\in A \text{ such that } a >> x\}.
\end{align*}
\end{enumerate}
\end{theorem}
The proof uses exactly the same arguments as the proof of Theorem \ref{theo:laa}, where the notion of average is replaced by the notion of weighted average.

Similarly, we can define the \emph{weighted minimal-AFP Pareto efficient correspondence} to be 
\begin{align*}
\sigma^w(A)=\{a\in PE(A \cup AFP^w(A)): a\geq \geq m^w_{afp}(A)\}.
\end{align*}
The analogue of Corollary \ref{cor:scc} for weighted approval voting mechanism states the following.
\begin{corollary}\label{cor:wscc}
The mechanism $\mathcal{AV}^w_\delta$ virtually implements the social choice correspondence $\sigma^w$.
\end{corollary}

This corollary has a similar analogue in the implementation literature. We normalize $w$ to be a probability vector. We define the social choice correspondence
\begin{align*}
\tau^w(A)=\{x\in PE(\conv(A)): x\geq \geq \avg^w(A)\}. 
\end{align*}
Note that $\sigma^w$ and $\tau^w$ are similar. In both cases the correspondence picks all the Pareto efficient points that Pareto dominate some \emph{threshold point}. For $\sigma^w$ the threshold point is the minimal weighted average fixed point, for $\tau^w$ the threshold point is simply the weighted average.  Dutta and Sen~\cite{DS} have proven that (under mild domain restrictions) $\tau^w$ can be implemented using their canonical mechanism. We prove that a discrete analogue of a similar SCC which picks as a threshold point the minimal average fixed point rather than the weighted average, can be implemented using weighted approval voting mechanism.  We also emphasize that weighted average fixed point always Pareto dominates the weighted average and therefore our mechanism implements a \emph{subset} of the outcomes that are implemented by the mechanism of Dutta and Sen~\cite{DS}.

\section{Discussion}\label{sec:dis}

\subsection{Other solution concepts}
\subsubsection{Strict Nash equilibria}\label{sec:str}
It remains an interesting open question to analyse the set of \emph{mixed} Nash equilibria of the presented approval voting mechanism. Given the fact that this question remains open it is natural to ask whether the pure Nash equilibria of the mechanism are ``more natural" solutions than mixed ones. Under mild genericity assumptions, we argue that indeed the pure Nash equilibria are more natural, because they are \emph{strict}; i.e., each players loses by deviation from equilibrium. More concretely, the genericity assumptions are:
\begin{enumerate}
\item Players' utility preferences are strict; i.e., $a_i\neq a'_i$ for $a\neq a'$.
\item We call an average fixed point $x$ \emph{trivial} if $|A \setminus A_{<,<}(x)|=1$, namely the average is done over a single alternative (this alternative must Pareto dominate all the other alternatives). The second genericity assumption requires that player is never indifferent between a non-trivial average fixed point and one of his alternatives.
\end{enumerate}
It is easy to see that under these assumptions, the pure Nash equilibria of the mechanism $\mathcal{AV}_\delta$ (see Section \ref{sec:eq-char}) are strict.

\subsubsection{Sequential elimination of dominated strategies}\label{sec:seq-elim}
A natural direction that arises is to analyse the approval voting mechanism w.r.t.\  sequential elimination of dominated strategies (rather than Nash equilibria). Unfortunately, our characterization of equilibrium strategies (see Section \ref{sec:eq-char}) indicates that there are cases where many strategies survive the procedure of sequential elimination (and therefore the sequential elimination has low predictive power). This simply follows from the fact that every strict equilibrium strategy (of a single player) survives any sequential elimination. To construct examples where the set of strict equilibria strategies (of a single player) is large, one might construct a case with ``many" alternatives that Pareto dominate an average fixed point $x$ and are Pareto dominated by an efficient alternative $a$ (i.e., the set $B:=\{b\in A: x<<b<<a\}$ is large), see Figure \ref{pic:ag-profile}. In such a case \emph{every subset} of alternatives from $B$ can be ``completed" to a strict equilibrium strategy of player 1 by adding the corresponding list of strategies out of $B$ (which is the set $\{b\in A: b_1>x_1 \text{ and } b_2<a_2 \}\setminus B$). Therefore every such ``completed" strategy survives the elimination of dominated strategies. 
Note that these completed strategies are not necessarily \emph{sincere} strategies, which are strategies of the form  $\{b\in A:b_1>c\}$ for some constant $c$. In the following example in all equilibria at least one player does not play a sincere strategy in equilibrium. This example also demonstrates that existence of a sincere pure Nash equilibrium---which is guaranteed for the standard approval voting mechanism (see \cite{NL})---is not guaranteed in our modified approval voting. 

\begin{example}\label{ex:sin}
There are six alternatives with utilities
\begin{align*}
A=((9,0),(0,9),(8,8),(7,7),(6,0),(0,6)).
\end{align*}
The unique average fixed point of $A$ is $(5,5)$. According to our characterization of pure Nash equilibria, the game admits exactly two pure Nash equilibria:
\begin{align*}
(\{1,3,4,5\},\{2,3,6\}) \text{ and } (\{1,3,5\},\{2,3,4,6\}).
\end{align*}
Note that $\{2,3,6\}$ is not sincere strategy of player $2$ (because it contain the alternative $(0,6)$ but not $(7,7)$), and $\{1,3,5\}$ is not sincere strategy of player $1$.
\end{example}
 
\subsection{More than two players}\label{sec:3p}
It is natural to ask whether a similar modification of the approval voting mechanism succeeds in selecting only Pareto efficient outcomes for the case of more than two players. In the case of more than two players it is not clear how to define the weights on the alternatives which got less approvals than the maximum. Here we suggest a natural extension of the two-player mechanism, but provide a three-player counter-example for this extension. The same counter-example holds for other extensions as well.

Given an action profile, let $k$ be the maximal number of approvals (among all the alternatives). With probability $\frac{\delta^m}{\sum_{i=1}^k \delta^k}$ the mechanism chooses uniformly at random among the alternatives that have been approved by at least $k-m$ players (note that indeed the alternatives that have been approved by $k$ players are chosen with the highest probability of $\frac{1}{\sum_{i=1}^k \delta^k}$).

The counter-example consists of eight alternatives 
\begin{align*}
A=((24,0,0),(0,24,0),(0,0,24),(7,7,0),(7,0,7),(0,7,7),(5,5,5),(6,6,6)).
\end{align*}
Consider the action profile $(\{1,4,5,7\},\{2,4,6,7\},\{3,5,6,7\})$ where every player approves all the alternatives with non-zero payoff except of the alternative $(6,6,6)$. Note that the payoffs of the players at this action profile are $(5\pm O(\delta), 5\pm O(\delta),5\pm O(\delta))$ which is Pareto dominated by the alternative $(6,6,6)$. We introduce here the intuition for the claim that the presented action profile is a pure Nash equilibrium. The average payoff of player 1 over the alternatives that achieve at least 1/2/3 approvals is 6.14/4.75/5 correspondingly. By approving the alternative $(6,6,6)$ player 1 will reduce his average payoff at alternatives that achieve at least 1 approval, while the other averages (of at least 2 and 3 approvals) remain unchanged. By disapproving the alternative $(5,5,5)$ player 1 reduces his payoff to $4.75\pm O(\delta)$ because after such deviation the maximal number of approvals is 2. It  can be checked that all other deviations (including more complex deviations that simultaneously approve and disapprove several alternatives) reduce player's 1 payoff. By symmetry the same holds for players 2 and 3.

\subsection{Ordinal preferences}\label{sec:ord}
Throughout the paper we have assumed that players have cardinal von Neumann - Morgenstern preferences over the alternatives. In fact, the proof of Theorem \ref{theo:laa} uses only the following properties of player's $i$ preferences $\succ_i$ over the lotteries:
\begin{enumerate}
\item $UN(B\cup \{a\})\succ_i UN(B)$ for $a\succ_i UN(B)$; namely, the player prefers to add an above-average alternative to the list.
\item $UN(B\setminus \{a\})\succ_i UN(B)$ for $UN(B) \succ_i a$; namely, the player prefers to erase a below-average alternative from the list.
\end{enumerate}
These two properties hold for more general settings than von Neumann - Morgenstern preferences. We can assume that preferences over lotteries are \emph{ordinal} and \emph{monotonic}. Monotonicity assumes that shifts in probability mass from less preferred to strictly preferred lotteries over $A$ yield a lottery which is strictly preferred. Note that monotonicity implies properties (1) and (2). Therefore, the results in the paper can be generalized to the case where preferences are ordinal and monotonic.

\appendix
\section{Virtual Implementation without Domain Restrictions}
Abreu and Sen \cite{AS} prove their characterization for virtual implementation under mild domain restrictions. In particular, they assume that a player's preferences over the alternatives are \emph{strict}; i.e., for $i=1,2$ holds $a_i\neq a'_i$ for every $a\neq a'$. Without the strictness restriction, as is the setting in our paper, we have the following negative result.

\begin{proposition}\label{pro:imp-u}
There is no Pareto efficient social choice \emph{function} that is virtually Nash implementable, even for two alternatives.
\end{proposition}

\begin{proof}
We set $\varepsilon<\frac{1}{2}$, and we prove that every mechanism that satisfies existence of $\varepsilon$-Pareto efficient equilibrium must violate uniqueness of equilibrium outcome.

Informally, the idea is to consider the utility profiles $((0,1),(1,1))$ where player 2 is indifferent between the two alternatives. Player 2 may act as if the utility profiles are $((0,0),(1,1))$, which in an approximately efficient mechanism should result in a high weight for the second action. Or, Player 2 may act as if the utility profiles are $((0,1),(1,0))$, which should result in a lower weight to the second action. This leads to two different equilibria outcomes.

Formally, for the utility profiles $A=((0,0),(1,1))$, let $(z_1,z_2)$ be a Nash equilibrium with the utility outcome $(p,p)$. Note that $p>\frac{1}{2}$ by $\varepsilon$-Pareto efficiency. In terms of lotteries, it means that alternative 2 is chosen with probability $p>\frac{1}{2}$ and no player can increase this probability by a unilateral deviation.

For the utility profiles $A'=((0,1),(1,0))$, let $(z'_1,z'_2)$ be a Nash equilibrium with the utility outcome $(q,1-q)$ be the outcome at the equilibrium $(z'_1,z'_2)$. Without loss of generality we assume that $q\leq \frac{1}{2}$. In terms of lotteries, it means that alternative 2 is chosen with probability $q\leq \frac{1}{2}$ and player 1 cannot increase this probability by deviation.

Now consider the utility profiles $A''=((0,1),(1,1))$. For the action profile $(z_1,z_2)$, the utility outcome is $(p,1)$, and player 1 cannot increase the probability of the second alternative to be chosen; i.e., player 1 has no profitable deviation. Obviously, player 2 has no profitable deviation either. For the action profile $(z'_1,z'_2)$, the outcome is $(q,1)$, and player 1 cannot increase the probability of the second alternative to be chosen. Therefore we have two different equilibria outcomes $(p,1)$ and $(q,1)$.
\end{proof}

\end{document}